\newcolumntype{L}[1]{>{\raggedright\let\newline\\\arraybackslash\hspace{0pt}}m{#1}}
\newcolumntype{C}[1]{>{\centering\let\newline\\\arraybackslash\hspace{0pt}}m{#1}}
\newcolumntype{R}[1]{>{\raggedleft\let\newline\\\arraybackslash\hspace{0pt}}m{#1}}
\newcommand{\boldpi}{\boldsymbol{\pi}}
\newcommand{\vtheta}{\bm{\theta}}
\newcommand{\vg}{\bm{g}}
\newcommand{\va}{\bm{a}}
\newcommand{\vpi}{\bm{\pi}}
\newcommand{\cA}{{\mathcal A}}
\newcommand{\cS}{{\mathcal S}}
\newcommand{\cZ}{{\mathcal Z}}
\newcommand{\cN}{\mathcal N}
\newtheorem{theorem}{Theorem}
\newtheorem{lemma}{Lemma}
\newcommand\blfootnote[1]{%
\begingroup
\renewcommand\thefootnote{}\footnote{#1}%
\addtocounter{footnote}{-1}%
\endgroup
}
\title{Taming Multi-Agent Reinforcement Learning with Estimator Variance Reduction}
\author{
Taher Jafferjee\textsuperscript{1}, Juliusz Ziomek\textsuperscript{1}, Tianpei Yang\textsuperscript{2}, Zipeng Dai\textsuperscript{1}, Jianhong Wang\textsuperscript{3}, \\ \textbf{Matthew E. Taylor\textsuperscript{2}, Kun Shao\textsuperscript{1}, Jun Wang\textsuperscript{4}, David Mguni\textsuperscript{1}$^\dag$}\AND
\textsuperscript{1} {\normalfont Huawei Noah's Ark Lab,}
\textsuperscript{2} {\normalfont University of Alberta,} \textsuperscript{3} {\normalfont University of Manchester}, \\ \textsuperscript{\textbf{4}} {\normalfont University College London}}
\begin{document}
\maketitle
\begin{abstract}
Centralised\blfootnote{$^\dag$Corresponding author  <davidmguuni@hotmail.com>. } training with decentralised execution (CT-DE) serves as the foundation of many leading multi-agent reinforcement learning (MARL) algorithms. Despite its popularity, it suffers from a critical drawback due to its reliance on learning from a single sample of the joint-action at a given state. As agents explore and update their policies during training, these single samples may poorly represent the actual joint-policy of the system of agents leading to high variance gradient estimates that hinder learning. To address this problem, we propose an enhancement tool that accommodates any actor-critic MARL method. Our framework, Performance Enhancing Reinforcement Learning Apparatus (PERLA), introduces a sampling technique of the agents' joint-policy into the critics while the agents train. 
This leads to TD updates that closely approximate the true expected value under the current joint-policy rather than estimates from a single sample of the joint-action at a given state. This produces low variance and precise estimates of expected returns, minimising the variance in the critic estimators which typically hinders learning. Moreover, as we demonstrate, by eliminating much of the critic variance from the single sampling of the joint policy, PERLA enables CT-DE methods to scale more efficiently with the number of agents. Theoretically, we prove that PERLA reduces variance in value estimates similar to that of decentralised training while maintaining the benefits of centralised training. Empirically, we demonstrate PERLA's superior performance and ability to reduce estimator variance in a range of benchmarks including \textit{Multi-agent Mujoco}, and \textit{StarCraft II Multi-agent Challenge}.  
\end{abstract}

\section{Introduction} \label{sec:introduction}
Multi-agent reinforcement learning (MARL) has emerged to be a powerful tool to enable autonomous agents to jointly tackle difficult tasks such as ride-sharing \citep{zhou2020smarts} and swarm robotics \citep{mguni2018decentralised}. Nevertheless, a key impediment to these algorithms is the high variance of the critic and policy gradient estimators. Reducing the variance of these estimators is critical since high variance estimators can lead to low sample efficiency and poor overall performance \citep{gu2016q}. 

In multi-agent systems, the environment reward function and state transition dynamics function depend on the \emph{joint-action} of all the agents in the system. As a result, for an agent in a given state, different executions of a particular action may return varying outcomes depending on the joint-actions of other agents. As agents' estimates are based on previous observations of joint actions, updates to the policies of other agents during training may result in returns that significantly deviate from current estimates. Consequently, key estimators for efficient learning can have high variance  severely impairing the learning process and hence, the agents' abilities to jointly maximise performance. 

Centralised Training-Decentralised Execution (CT-DE) paradigm is a popular MARL training framework in which agent's observe the joint behaviour of other agents during training. Therefore, the CT-DE paradigm has at its core a (possibly shared) critic for each agent that makes use of all available information generated by the system, including the global state and the joint action \citep{peng2017multiagent}. This added information can be exploited by the critic during training to promote greater levels of coordination between the agents, which is often required to efficiently learn the optimal joint policies. Moreover, this added information can serve to reduce systemic variance. CT-DE has been shown to be highly effective in promoting high performance outcomes and thus serves as the foundation of many popular MARL methods such as MAPPO, Q-DPP \citep{yang2020multi}, QMIX \citep{rashid2018qmix}, SPOT-AC \citep{mguni2021learning}, and COMA \citep{foerster2018counterfactual}. 

In spite of these benefits, the CT-DE framework can be plagued by high variance updates during training. Central to the learning protocol of CT-DE algorithms are agent policy updates that are based on a single sample of the joint-action executed from other agents' policies at a given state. This can produce value function (VF) updates based on improbable events and result in inaccurate estimates of expected returns. This, in turn often leads to high variance VF estimates and poor sample efficiency. This is exemplified in a simple Coordination Game with the reward structure shown in Figure \ref{fig:concept_experiment}. In this game, miscoordinated actions (i.e., $(l, r)$ or $(r, l)$) are penalised, there is a sub-optimal stable (Nash equilibrium (NE)) joint strategy $(r, r)$, and the optimal joint strategy is $(l, l)$. In this setting, random occurrences of $(l,l)$ are relatively improbable which can induce convergence to the joint strategy $(r, r)$. To illustrate this, suppose the action $(r, r)$ is sampled. A TD update towards this sample (with reward $0.5$)  may cause each agents to increase the policy probability of sampling $r$ (and divert the agents to converge to the sub-optimal NE). On the other hand if the joint-action sampled is $(r, l)$, due to the reward of $-1$, the agent may reduce the policy probability of sampling $r$. Thus, an  update following a single sample of this joint-action leads to an increase in probability weight on $r$, while the other decreases it producing the possibility of highly variant updates.

To tackle this key challenge, we propose Performance Enhancing Reinforcement Learning Apparatus (PERLA), an enhancement tool for CT-DE based actor-critic MARL algorithms. A key insight underlying the framework is that by utilising shared parameters of the agents' individual policies, the variance of MARL critic estimators can significantly reduced while more accurately representing the true expected returns.  In PERLA, the agents share key parameter information about their policies. This is then used  during training to perform a sampling process of the joint policy. Critically, this allows us to learn much faster than if we only use a single sample of the joint action in the critic function. This dramatically reduces the variance of the critic since the critic estimate now closely approximates the \emph{expected value} under the current joint policy of the agents. Consequently, VF updates are robust against improbable actions observed in single samples. 

The benefits of PERLA can be readily observed in the Coordination Game in Fig. \ref{fig:concept_experiment} where PERLA is applied to MAPPO \citep{yu2021surprising}, a leading MARL algorithm. The line chart in Fig.~\ref{fig:concept_experiment} shows the probability of sampling action `Left' averaged across both agents over $10$ runs against training steps in the Coordination game. As PERLA MAPPO consistently induces lower variance through training, the policy monotonically increases the probability of playing `Left.' As an example, whenever Agent 1 samples the action $r$,  the VF updates calculate the expected value under the policy of Agent 2. The update is towards the expected value due to $(r, \pi_2(l))$ and $(r, \pi_2(r))$ where $\pi_2(l)$ and $\pi_2(r)$ represent the probability of actions $l$ and $r$ respectively by Agent $2$ under its policy. 

In this way, the return of Agent 1 of taking $r$ is computed more accurately, and the VF update lower variance. This enables PERLA to produce consistent convergence of the underlying MARL method MAPPO to the optimal strategy, despite its less likely occurrence (relative to the miscoordinated joint actions) under stochastic policies. On the other hand, since vanilla MAPPO is exposed to random occurrences of miscoordination, it often converges to the sub-optimal stable point due to the penalties of miscoordination. In this example, vanilla MAPPO converges to the optimal NE in $6$ of $10$ runs while PERLA MAPPO converges to the optimal NE in $10$ out of $10$ runs.

\begin{figure}[t]
        \centering
        \begin{subfigure}[t]{0.51\textwidth}
            \centering
            \includegraphics[width=0.65\textwidth]{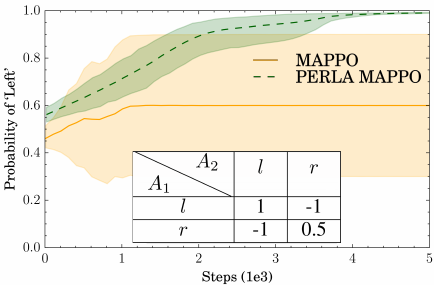}
        \end{subfigure}
        \begin{subfigure}[t]{.215\textwidth}
            \centering
            \includegraphics[width=\textwidth]{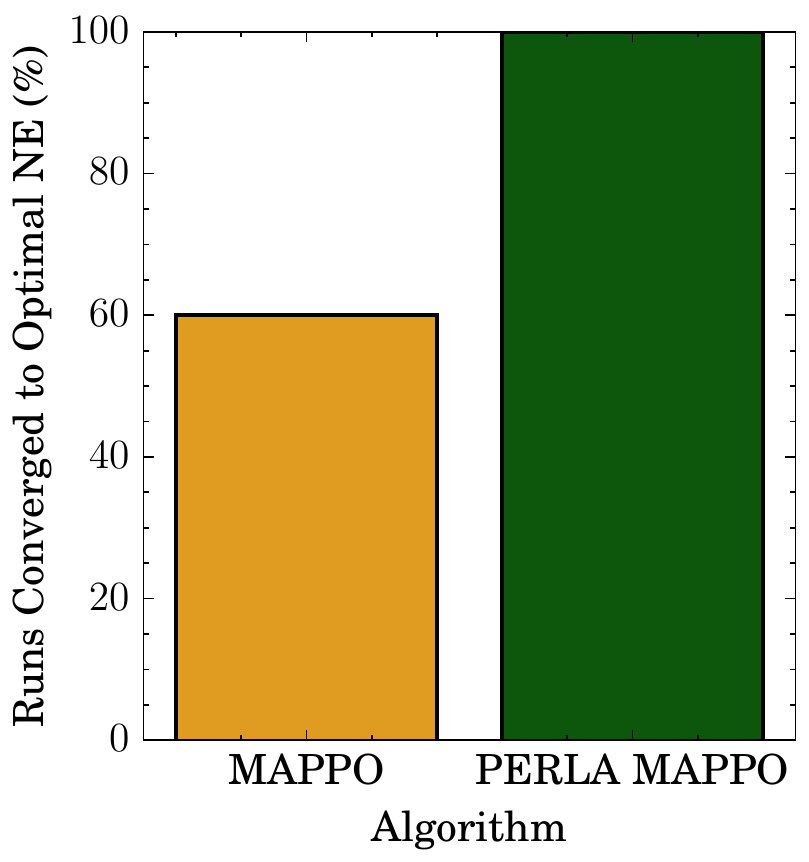}
        \end{subfigure}
        \caption{\emph{Inset table on left}: Payoff matrix of Coordination Game. In this game the actions ($(l, r)$, $(r, 1)$) are penalised. There is also a sub-optimal Nash Equilibrium, $(r, r)$. The joint strategy that maximises the agents' payoff $(l, l)$. \emph{Left}: Policy probability of `Left.' PERLA MAPPO learns the optimal policy and features less variance. \emph{Right}: percentage of runs converged to optimal solution.}
        \label{fig:concept_experiment}\vspace{-0.5 cm}
\end{figure}

We summarise the advantages of PERLA below.
\newline\textbf{1) PERLA improves sample efficiency and convergence properties.} PERLA is easily incorporated into CT-DE based Actor-Critic algorithms, and significantly boosts performance over the base learners (validated empirically on  MAPPO~\citep{yu2021surprising} in Sec.~\ref{sec:performance}). 
\newline\textbf{2) PERLA enables MARL agents to scale efficiently} by factoring the behaviour of other agents when querying the critic and reducing variance. This enables efficient scaling with the number of agents while imposing no restrictive VF constraints (validated empirically in Sec.~\ref{sec:scaling}).
\newline\textbf{3) PERLA has a theoretically sound basis.} We prove PERLA induces a vast reduction of variance of VF estimates (Theorem \ref{th:mcfvar}) and that it preserves policy gradient estimators (Theorem \ref{th:equal_exp}) ensuring the consistency of its solution with the system objective. We also prove  that PERLA applied to actor-critic algorithms converges almost surely to a locally optimal joint policy profile (Theorem \ref{th:convergence}). 

Recent actor-critic methods have shown significant performance improvements over previous MARL algorithms \citep{mguni2021ligs, kuba2021settling, yu2021surprising}, and are state-of-the-art in a range of MARL benchmarks. These actor-critic formalisms are natural candidates for PERLA as we can permit the critic to sample the joint-policy, while leaving the actor (i.e., the policy) unchanged. Thus, PERLA maintains the setup of decentralised execution (and centralised training). 

\section{Related Work}
CT-DE MARL algorithms can be placed within two categories: \emph{value-based} or \emph{actor-critic} methods. In value-based methods, centralised training is assured to generate policies that are consistent with the desired system goal whenever the IGM principle \citep{son2019qtran} is satisfied.\footnote{IGM imposes an equivalence between the joint greedy action and the collection of individual greedy actions.} To realise the IGM principle in CT-DE, QMIX and VDN propose two sufficient conditions of IGM to factorise the joint action-value function. Such decompositions are limited by the joint action-value function class they can represent and can perform badly in systems that do not adhere to these conditions \citep{wang2020qplex}. \footnote{WQMIX \citep{rashid2020weighted} considers a weighted projection towards better performing joint actions but does not guarantee IGM consistency.} Other value-based methods such as QPLEX \citep{wang2020qplex} have been shown to fail in simple tasks with non-monotonic VFs \citep{rashid2020weighted} or in the case of QTRAN \citep{son2019qtran}, scale poorly in complex MARL tasks such as the StarCraft Multi-Agent Challenge (SMAC) \citep{peng2020facmac}.  

On the other hand, actor-critic type methods represent some of the highest performing methods such as MAPPO \citep{yu2021surprising} and are among state-of-the-art. Indeed, recent work by \citet{pmlr-v162-fu22d} has shown that in particular MARL reward structures, actor-critic based CT methods are dominant as the class of algorithms that produce optimal policies. Further, empirical studies \citep{papoudakis2021benchmarking, de2020independent} have shown the strength of actor-critic based CT methods over competing approaches. \cite{foerster2018counterfactual} proposed counter-factual baselines as a method to mitigate variance in MARL actor-critic methods. Their method seeks to accurately assign credit to agents for their contribution to the reward received following execution of a joint-action. \cite{kuba2021settling} propose a general baseline (for the critic) applicable to all MARL actor-critic methods to mitigate variance. Moreover, the authors show that MARL variance may be dis-aggregated into the variance due to the state, the agent's own actions, and the actions of other agents.  Unlike their method, which is limited to mitigating the variance from the agent's own action, we take a step further to mitigate variance due to other agents in the system which is a key impediment for MARL methods especially with larger numbers of agents. 

\section{PERLA Framework}
We formulate the MARL problem as a Markov game (MG) \citep{shoham2008multiagent}
represented by a tuple $\mathfrak{G}=\langle \mathcal{N},\mathcal{S},\left(\mathcal{A}_{i}\right)_{i\in\mathcal{N}} \coloneqq \boldsymbol{\mathcal{A}},P,R_i,\gamma \rangle$. $\mathcal{N} \in \mathbb{N}$ is the number of agents in the system, $\mathcal{S}$ is a finite set of states, $\mathcal{A}_i$ is an action set for agent $i\in\mathcal{N}$, and $R_i:\mathcal{S}\times\boldsymbol{\mathcal{A}}\to\mathcal{P}(D)$ is the reward function that agent $i$ seeks to maximise ($D$ is a compact subset of $\mathbb{R}$), and $P:\mathcal{S} \times \boldsymbol{\mathcal{A}} \times \mathcal{S} \rightarrow [0, 1]$ is the probability function describing the system dynamics 
We consider a partially observable setting, in which given the system state $s^t\in \cS$, each agent $i\in \cN$ makes local observations $\tau^t_i=O(s^t,i)$ where $O:\cS\times \cN \to \cZ_i$ is the observation function and $\cZ_i$ is the set of local observations for agent $i$.  To decide its actions, each agent $i\in\mathcal{N}$ samples its actions from a \textit{Markov policy}
$\pi_{i,\vtheta_i}: \cZ_i \times \mathcal{A}_i \rightarrow [0,1]$, which is parameterised by the vector $\vtheta_i\in\mathbb{R}^d$. Throughout the paper, $\pi_{i,\vtheta_i}$ is abbreviated as $\pi_{i}$.  At each time $t\in 0,1,\ldots,$ the system is in state $s^t\in\mathcal{S}$ and each agent $i\in\mathcal{N}$ takes an action $a_i^t\in\mathcal{A}_i$, which together with the actions of other agents $\va_{-i}^t:=(a_1^t,\ldots,a_{i-1}^t,a_{i+1}^t,\ldots, a_N^t)$,  produces an immediate reward $r_i\sim R(s^t,\va^t_i)$ for agent $i\in\mathcal{N}$. The system then transitions to a next state $s^{t+1}\in\cS$ with probability $P(s^{t+1}|s^t,\boldsymbol{a}^t)$ where $\boldsymbol{a}^t=(a_1^t,\ldots, a_N^t)\in\boldsymbol{\mathcal{A}}$ is the \textit{joint action} which is sampled from the \textit{joint policy} $\boldsymbol{\pi}:=\prod_{i=1}^N\pi_{i}$.  
The goal of each agent $i$ is to maximise its expected returns measured by its VF $v_i(s)=\mathbb{E}\left[\sum_{t=0}^\infty \gamma^tR_i(s^t,\boldsymbol{a}^t)|s^0=s\right]$ and the action-value function for each agent $i\in\cN$ is given by $Q_i(s,\boldsymbol{a})=\mathbb{E}[\sum_{t=0}^\infty R_i(s^t,\boldsymbol{a}^t)|\boldsymbol{a}^0=\boldsymbol{a}] $,
where  $-i$ denotes the tuple of agents excluding agent $i$. Likewise, we denote $\prod_{j=1, j \neq i}^N \pi_j$ as $\vpi_{-i}$. In the fully cooperative case all agents share the same goal: $R_1=\ldots R_N:=R$.

In the CT paradigm, given a state  $s\in\cS$ and  the joint action $\boldsymbol{a}\in\boldsymbol{\cA}$, each agent $i\in\mathcal{N}$ computes its action-value function $Q_i(s,\boldsymbol{a})$.  The action-value function provides an estimate of the agent's expected return using its policy given the behaviour of all other agents $\cN/\{i\}$ for a given action $a_i\in\cA_i$. Therefore, $Q_i(s,\boldsymbol{a})$ seeks to provide an estimate of the agent's own action, accounting for the actions of others. 
Agents use stochastic policies to explore, and therefore the aggregated joint action $\boldsymbol{a}\sim \boldpi$ may be composed of exploratory actions sampled from individual agent's policies.

The core component of PERLA is a sampling process of the joint-policy $\boldsymbol{\pi}_{-i}$ for each agent $i$. This is used to compute the expected value of $a_i$ under joint-policy of the other agents in the system.
We compute the expected value of agent $i$'s action-value function $\tilde{Q}$ , as defined below:
\begin{align} \label{eq:marginalised_q}
         \tilde{Q}_i(s,a_i):=\mathbb{E}_{\boldsymbol{\pi}_{-i}}\left[Q_i(s,\boldsymbol{a})\right]; \quad \boldsymbol{a}\equiv(a_i,\boldsymbol{a}_{-i})\in \boldsymbol{\cA},
\end{align}
where $s\in\cS, a_i\sim \pi_i(\cdot|\tau_i), \boldsymbol{a}_{-i}\sim \boldsymbol{\pi}_{-i}(\cdot|\tau_{-i})$. This object requires some explanation: as with $Q_i$, the function $\tilde{Q}_i$ seeks to estimate the expected return following agent $i$ taking action $a_i$. However, unlike $Q_i$, $\tilde{Q}_i$ builds in the expected value under the  actions of other agents, $\boldsymbol{a}_{-i}$ as well. Consequently, the critic can more accurately estimate the value of action $a_i$ given the behaviour of the other agents in the system. 

In practice it may be impossible to analytically calculate \ref{eq:marginalised_q}, and so to approximate $\tilde{Q}_i(s,a_i)$, for any $\forall s\in\cS$ and any  $a_i\in \cA_i$ we construct:
\begin{align}
\hspace{-4 mm}    \hat{Q}_i(s,a_i)=\frac{1}{k} \sum_{j=1}^k Q_i(s,a_i,\va_i^{(j)})\boldsymbol; \;\;
    \va_{-i}^{(j)} \sim \pi(\va_{-i}|\tau_{-i}). \label{approx_margin_q}
\end{align}

We now give a concrete instantiation of PERLA on the popular MAPPO \citep{yu2021surprising} algorithm. This gives rise to \textbf{PERLA MAPPO} algorithm as shown below in Algorithm \ref{alg:one}. As MAPPO's critic function for each agent only takes $s \in \cS$ and $a_i \in \cA$ as input, we augment the input of the standard MAPPO critic to take $\boldsymbol{a}_{-i}$ as well. We do not make any changes to the standard MAPPO policy, and it continues to only take the agent's local observation as input. As the critic is only needed during CT and not required for execution, PERLA MAPPO operates under the CT-DE paradigm; policies are executed in a fully decentralised manner. In PERLA MAPPO we utilise a value-function style one-step critic, where $Q(s^t, a_i^t, \va_{-i}^{t(j)}) = r_t + \gamma V(s^{t+1}, \va_{-i}^{t+1(j)}) $ and $V(s, \va_{-i})$ which is approximated via a deep neural network (PERLA is fully compatible with different types of critics). In this case, marginalising the behaviour of other agents is equivalent of marginalising the next step value function, as explained in more detail in Appendix \ref{ap:vcritic}. During training, when performing policy and critic updates, we require to generate samples of actions from other agents. In practice, this can either be done by each agent communicating its policy parameters or samples of actions directly. Thus the communication complexity for each round would scale as $\mathcal{O}(\min \{D, KA\})$, where $D$ is the length of policy parameters and $A$ is the size of action space. After the samples are communicated, the approximate expectation of the critic can be performed, which in turn can be used to compute the TD-error. The critic is trained with squared TD error as a loss and the policy is update with the TD-error as the advantage estimate. To perform policy updates, we use PPO \citep{schulman2017proximal}. 

\begin{algorithm}[t]
\caption{PERLA MAPPO}\label{alg:one}
\begin{algorithmic}[1]
 \STATE {\bfseries Input: } Joint-policy $\boldsymbol{\pi}$,
    critic parameters $\boldsymbol{\rho}$,
    policy parameters $\boldsymbol{\theta}$,
    environment $E$,
    number of marginalisation samples $K$
    
 \STATE {\bfseries Output: }Optimised joint-policy $\boldsymbol{\pi}^*$

\STATE Augment MAPPO critic $V_{\boldsymbol{\rho}}$ to as input state $\boldsymbol{s}$ and joint-action $\boldsymbol{a}_{-i}$ 
\STATE Rollout $\boldsymbol{\pi}$ in $E$ to obtain data $D = (\boldsymbol{s}^0, \boldsymbol{a}^1, r^1,\ldots,\boldsymbol{s}^{T-1}, \boldsymbol{a}^T, r^T)$ \;

\FOR{$t = 0$ {\bfseries to} $T-1$}
    \FOR {each agent $i$}
        \STATE Generate $K$ samples of joint-actions at the next-state observations $\{\boldsymbol{a}_{-i}^{t(j)} \sim \boldsymbol{\pi}_{-i}(\boldsymbol{\tau}^{t+1})\}_{j=1}^K$ \;
        
        \STATE Compute TD-error: $\delta_i$ = $\boldsymbol{r}  + \gamma \frac{1}{K}\sum_{j=1}^K V_{\rho}(\boldsymbol{s}^t, \boldsymbol{a}^{t(j)}_{-i}) $  $ - \frac{1}{K}\sum_{j=1}^K V_{\rho}(\boldsymbol{s}^{t}, \boldsymbol{a}^{t(j)}_{-i})
        $ 
        over sampled joint-actions for each agent 
        
        \STATE Update critic parameters $\boldsymbol{\rho}$ with $\delta_i^2$ as the loss\;
        
        \STATE Update $i$th agent's policy parameters $\boldsymbol{\theta}_{i}$ 
        with advantages given by $\delta_i$, using PPO update
    \ENDFOR
    \ENDFOR
\end{algorithmic}
\end{algorithm}

\section{Theoretical Analysis}
We now perform a detailed theoretical analysis of PERLA. Here, we derive theoretical results that establish the key benefits of PERLA, namely that it vastly reduces variance of the key estimates used in training.  We begin with a result that quantifies the reduction of variance when using $\hat{Q}_i$ instead of $Q_i$. We defer all proofs to the Appendix.

\begin{restatable}[]{theorem}{mcfvar}
\label{th:mcfvar}
The variance of marginalised Q-function $\tilde{Q}_i$ is smaller than that of the non-marginalised Q-function $Q_i$ for any $i\in\cN$, that is to say:
$
    \textrm{Var}(Q_i(s, \mathbf{a})) \ge \textrm{Var}(\tilde{Q}_i(s, a_i)) 
$. Moreover, for the approximation to the marginalised Q-function (c.f. Equation \ref{approx_margin_q}) the following relationship holds:
\begin{align}\nonumber
    \textrm{Var}\left(\hat{Q}_i(s, a_{i})\right) = \frac{1}{k} \textrm{Var}\left(Q_i(s,\va_{-i},a_{i})\right) + \frac{k - 1}{k} \textrm{Var}\left(\tilde{Q}_i(s, a_{i})\right).
\end{align}
\end{restatable}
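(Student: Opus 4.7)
The plan is to prove both statements by a single application of the law of total variance, with conditioning on the pair $(s, a_i)$. Under whatever joint distribution is being used for $(s, a_i, \va_{-i})$ (induced by the state visitation distribution, $a_i \sim \pi_i$, and $\va_{-i} \sim \boldpi_{-i}$), the object $\tilde{Q}_i(s, a_i) = \mathbb{E}_{\boldpi_{-i}}[Q_i(s, a_i, \va_{-i}) \mid s, a_i]$ is by definition the conditional expectation of $Q_i$ given $(s, a_i)$. Hence the law of total variance gives
\begin{equation*}
\textrm{Var}(Q_i(s, \mathbf{a})) \;=\; \mathbb{E}\bigl[\textrm{Var}(Q_i(s, \mathbf{a}) \mid s, a_i)\bigr] \;+\; \textrm{Var}\bigl(\tilde{Q}_i(s, a_i)\bigr).
\end{equation*}
Since the first term on the right is non-negative, the inequality $\textrm{Var}(Q_i(s, \mathbf{a})) \ge \textrm{Var}(\tilde{Q}_i(s, a_i))$ follows immediately. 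This disposes of the first claim.

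For the second claim, I would exploit that, conditional on $(s, a_i)$, the draws $\va_{-i}^{(1)}, \ldots, \va_{-i}^{(k)} \sim \boldpi_{-i}(\cdot \mid \tau_{-i})$ used to form $\hat{Q}_i$ are i.i.d. Applying the law of total variance to $\hat{Q}_i$ and using conditional independence of the $k$ samples,
\begin{equation*}
\textrm{Var}\bigl(\hat{Q}_i(s, a_i)\bigr) \;=\; \mathbb{E}\!\left[\frac{1}{k}\,\textrm{Var}\bigl(Q_i(s, a_i, \va_{-i}) \mid s, a_i\bigr)\right] + \textrm{Var}\!\left(\mathbb{E}[\hat{Q}_i(s, a_i) \mid s, a_i]\right),
\end{equation*}
and the inner conditional expectation on the right equals $\tilde{Q}_i(s, a_i)$ by the definition of $\hat{Q}_i$ as a sample-mean approximation.

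The final step is then purely algebraic: substitute the identity $\mathbb{E}[\textrm{Var}(Q_i \mid s, a_i)] = \textrm{Var}(Q_i(s, \mathbf{a})) - \textrm{Var}(\tilde{Q}_i(s, a_i))$ obtained in the first part. This yields
\begin{equation*}
\textrm{Var}\bigl(\hat{Q}_i(s, a_i)\bigr) \;=\; \tfrac{1}{k}\bigl(\textrm{Var}(Q_i(s, \mathbf{a})) - \textrm{Var}(\tilde{Q}_i(s, a_i))\bigr) + \textrm{Var}(\tilde{Q}_i(s, a_i)) \;=\; \tfrac{1}{k}\textrm{Var}(Q_i) + \tfrac{k-1}{k}\textrm{Var}(\tilde{Q}_i),
\end{equation*}
which is the stated decomposition.

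There is no real obstacle here; the argument is essentially a bookkeeping exercise with the tower property. The only item that requires a sentence of care is making explicit the joint measure over $(s, a_i, \va_{-i})$ under which each variance is taken, and verifying that both parts use the \emph{same} marginal law on $(s, a_i)$, so that the total variance decomposition can be chained across the two parts. Once this is fixed, the i.i.d.\ structure of the $k$ marginalisation samples does the rest.
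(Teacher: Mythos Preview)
Your proposal is correct and follows essentially the same approach as the paper: apply the law of total variance conditioning on $(s,a_i)$ to get the inequality, then apply it again to $\hat{Q}_i$ using the conditional i.i.d.\ structure of the $k$ samples and substitute the first identity. The only cosmetic difference is that the paper first abstracts this into a general lemma about $f(x_1,\dots,x_N)$ and its partial conditional expectation $\tilde{f}(x_1,\dots,x_M)$ before specialising to the $Q$-function, whereas you work directly with $Q_i$, $\tilde{Q}_i$, and $\hat{Q}_i$.
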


Therefore for $k = 1$ we get that the approximation has the same variance as the non-marginalised Q-function. However, for any $k > 1$ the approximation to marginalised Q-function has smaller variance that the non-marginalised Q-function. Therefore marginalisation procedure of PERLA can essentially be used as a variance reduction technique. Let us now analyse how this framework can be applied to enhance multi-agent policy gradient algorithm, which is known to suffer from high variance in its original version.

In the policy gradient algorithms, we assume a fully cooperative game which avoids the need to add the agent indices to the state-action and state-value functions since the agents have identical rewards. The goal of each agent is therefore to maximise the expected return from the initial state defined as $\mathcal{J}(\vtheta) = \mathbb{E}_{s_0 \sim p(s_0)}[v(s_0)]$, where $p(s_0)$ is the distribution of initial states and $\vtheta = (\vtheta_1^T,\dots,\vtheta_N^T)^T$ is the concatenated vector consisting of policy parameters for all agents. The following well-known theorem establishes the gradient of $\mathcal{J}(\vtheta)$ with respect to the policy parameters.
\begin{theorem}[MARL Policy Gradient \citep{zhang2018fully}] \label{th:mapg}
\begin{equation*}
    \nabla_{\vtheta_i} \mathcal{J}(\vtheta) = \mathbb{E} \left[ \sum_{t=0}^{\infty}\gamma^{t}Q(s^{t}, \va_{-i}^{t}, a_{i}^{t})\nabla_{\vtheta_{i}}\log \pi_{i}(a_{i}^{t}\big|\tau^t_i) \right].
\end{equation*}
\end{theorem}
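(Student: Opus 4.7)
The plan is to mirror the single-agent policy gradient derivation of Sutton et al., but taking care to exploit the product structure of the joint policy $\vpi(\va|\tau) = \pi_i(a_i|\tau_i)\vpi_{-i}(\va_{-i}|\tau_{-i})$ so that the $\vtheta_i$-gradient lands only on agent $i$'s factor. Concretely, I would start from the Bellman relation $v(s)=\sum_{\va}\vpi(\va|\tau)Q(s,\va)$ and differentiate in $\vtheta_i$. Since $\vtheta_i$ only parameterises $\pi_i$, the product rule and the log-derivative identity give
\begin{equation*}
\nabla_{\vtheta_i}\vpi(\va|\tau) \;=\; \vpi(\va|\tau)\,\nabla_{\vtheta_i}\log\pi_i(a_i|\tau_i),
\end{equation*}
so that the ``policy-gradient part'' at state $s$ reads $\sum_{\va}\vpi(\va|\tau)\,Q(s,\va)\,\nabla_{\vtheta_i}\log\pi_i(a_i|\tau_i)=\mathbb{E}_{\va\sim\vpi}[Q(s,\va_{-i},a_i)\nabla_{\vtheta_i}\log\pi_i(a_i|\tau_i)]$, which is already the desired per-step summand.

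Next I would handle the ``value-gradient part.'' Using $Q(s,\va)=R(s,\va)+\gamma\sum_{s'}P(s'|s,\va)v(s')$, the term $\nabla_{\vtheta_i}Q(s,\va)$ reduces to $\gamma\sum_{s'}P(s'|s,\va)\nabla_{\vtheta_i}v(s')$, because $R$ and $P$ do not depend on $\vtheta_i$. Substituting back yields a recursion
\begin{equation*}
\nabla_{\vtheta_i}v(s)\;=\;\mathbb{E}_{\va\sim\vpi(\cdot|\tau)}\!\left[Q(s,\va)\nabla_{\vtheta_i}\log\pi_i(a_i|\tau_i)\right]+\gamma\,\mathbb{E}_{\va\sim\vpi,\,s'\sim P(\cdot|s,\va)}\!\left[\nabla_{\vtheta_i}v(s')\right].
\end{equation*}
I would then unroll this recursion $t$ times; at step $t$ the inner expectation becomes one over trajectories of length $t$ produced by $\vpi$ starting at $s$, weighted by $\gamma^t$. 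Taking the initial state $s^0\sim p(s_0)$ and summing the geometric telescoping series over $t=0,1,\dots$ produces
\begin{equation*}
\nabla_{\vtheta_i}\mathcal{J}(\vtheta)\;=\;\mathbb{E}\!\left[\sum_{t=0}^{\infty}\gamma^{t}Q(s^{t},\va_{-i}^{t},a_i^{t})\,\nabla_{\vtheta_i}\log\pi_i(a_i^{t}|\tau_i^{t})\right],
\end{equation*}
which is exactly the claim; the outer expectation is over trajectories generated by rolling out $\vpi$ from $p(s_0)$.

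The main obstacles are not structural but technical. First, I need a justification for exchanging $\nabla_{\vtheta_i}$ with the infinite sum/expectation; since rewards lie in a compact subset of $\mathbb{R}$ and $\gamma\in(0,1)$, the partial sums $\sum_{t=0}^{T}\gamma^{t}|R(s^t,\va^t)\nabla_{\vtheta_i}\log\pi_i|$ are dominated by a summable envelope provided $\|\nabla_{\vtheta_i}\log\pi_i\|$ is bounded along trajectories (a standard regularity assumption on parameterised policies), so dominated convergence applies. Second, I have to be careful that the factorisation $\vpi=\pi_i\vpi_{-i}$ is preserved throughout the unrolling, i.e.\ that the expectation over $\va_{-i}$ is correctly attached to $Q$ at every timestep; this is immediate because $\vpi_{-i}$ does not depend on $\vtheta_i$, so its factor passes through the gradient untouched. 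Once these two points are established, the rest is bookkeeping identical to the classical single-agent derivation.
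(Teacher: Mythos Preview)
The paper does not actually prove this statement: Theorem~\ref{th:mapg} is simply quoted from \cite{zhang2018fully} as a known result and is used as a black box to motivate the estimators $\vg_i^{\textrm{C}}$, $\vg_i^{\textrm{D}}$, $\vg_i^{\textrm{P}}$. There is no derivation in either the main text or the appendix against which to compare your proposal.

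That said, your argument is the standard and correct one. The only step worth flagging is the interaction between the state $s$ and the observation $\tau$ in the paper's partially observable setup: you write $v(s)=\sum_{\va}\vpi(\va|\tau)Q(s,\va)$ with $\tau=O(s,\cdot)$ appearing implicitly as a function of $s$, which is fine here because the observation map $O$ is deterministic and independent of $\vtheta_i$, so differentiating through it costs nothing. With that caveat, the factorisation $\nabla_{\vtheta_i}\log\vpi=\nabla_{\vtheta_i}\log\pi_i$, the Bellman recursion, and the unrolling are exactly how the cited reference (and essentially every multi-agent policy-gradient paper) obtains the result.
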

Therefore, to calculate the gradient with respect to policy parameters, it is necessary to make use of the state-action-values. In practice, it can estimated by a function approximator, which gives rise to Actor-Critic methods \citep{konda1999actor}. In MARL, one can either maintain a centralised critic that provides state-action-value for the $i^{th}$ agent using the knowledge of the other agents' actions or introduce a decentralised critic that does not take actions of others (in its inputs). This gives rise to the centralised training decentralised execution (CT-DE)  $\vg_{i}^{\textrm{C}}$ and decentralised  $\vg_{i}^{\textrm{D}}$ gradient estimators respectively, as defined below. 
\begin{align} \label{eq:ctde_grad_est}
    \vg_{i}^{\textrm{C}}  &:=  \sum_{t=0}^{\infty}\gamma^{t}Q(s^{t}, \va_{-i}^{t}, a_{i}^{t})\nabla_{\vtheta_{i}}\log \pi_{i}(a_{i}^{t}\big|\tau^t_i), \\ 
     \vg_{i}^{\textrm{D}}  &:= \sum_{t=0}^{\infty}\gamma^{t}\tilde{Q}(s^{t}, a_{i}^{t})\nabla_{\vtheta_{i}}\log \pi_{i}(a_{i}^{t}\big|\tau^t_i),\label{eq:dt_grad_est}
\end{align}
where $Q(s_{t}, a_{-i}^{t}, a_{i}^{t})$ and $\tilde{Q}(s^{t}, a_{i}^{t})$ are the CT-DE and decentralised critics respectively. In PERLA Actor-Critic algorithms (such as in Algorithm \ref{alg:one}) we use a third estimator - the PERLA estimator:
\begin{equation}  \label{eq:perla_grad_est}
    \vg_{i}^{\textrm{P}}  :=  \sum_{t=0}^{\infty}\gamma^{t}\hat{Q}(s^{t}, a_{i}^{t} )\nabla_{\vtheta_{i}}\log \pi_{i}(a_{i}^{t}\big|\tau^t_i), 
\end{equation}
where $\hat{Q}(s^{t}, a_{i}^{t})$ is the Monte-Carlo approximation of $\tilde{Q}(s^{t}, a_{i}^{t})$. Note, in this approach we maintain a centralised critic $Q(s^t, a^t_{-i},\va^t_i)$, therefore the approximation to the marginalised Q-function is obtained using $\hat{Q}(s^t,a^t_{i}) = \frac{1}{k} \sum_{j=1}^k Q(s^t,a^t_i ,\va^{t(j)}_{-i})$, where $\va^{t(j)}_{-i} \sim \pi_{-i}(\va^t_{-i}|\tau^t_{-i})$. The PERLA estimator is equal in expectation to the CT-DE estimator as stated by the following Theorem.
\begin{restatable}[]{theorem}{equalexp} \label{th:equal_exp}
Given the same (possibly imperfect) critic, the estimators $\vg_i^C$ and $\vg_i^P$ have the same expectation, that is:
\begin{equation*}
    \mathbb{E}[\vg_{i}^{\textrm{C}}]  = \mathbb{E}[\vg_{i}^{\textrm{P}}].
\end{equation*}
\end{restatable}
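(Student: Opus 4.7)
The plan is to use the tower property of conditional expectation: both estimators, when conditioned on the realised pair $(s^t, a_i^t)$ along the on-policy trajectory, have the same conditional expectation $\tilde{Q}(s^t, a_i^t)\nabla_{\vtheta_i}\log \pi_i(a_i^t|\tau_i^t)$. Since the trajectory generating distribution is identical in both cases (the auxiliary samples $\va_{-i}^{t(j)}$ used in PERLA do not affect the rollout), summing over $t$ with the $\gamma^t$ weights then yields the desired equality.

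First I would apply linearity of expectation to exchange the infinite sum and expectation (assuming the usual integrability/discounting conditions), reducing the claim to showing that for each $t$,
\begin{equation*}
\mathbb{E}\!\left[Q(s^t,\va_{-i}^t,a_i^t)\nabla_{\vtheta_i}\log\pi_i(a_i^t|\tau_i^t)\right] = \mathbb{E}\!\left[\hat{Q}(s^t,a_i^t)\nabla_{\vtheta_i}\log\pi_i(a_i^t|\tau_i^t)\right].
\end{equation*}
The outer expectation on each side is over the same distribution of trajectories generated by $\boldpi$, so the $\nabla_{\vtheta_i}\log\pi_i(a_i^t|\tau_i^t)$ factor can be pulled outside after conditioning on $(s^t, a_i^t)$.

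Next I would verify the two conditional expectations agree. For the CT-DE side, I would invoke the factorisation $\boldpi = \prod_j \pi_j$: conditional on $s^t$ (and hence on $\tau_{-i}^t = O(s^t,\cdot)$) and on $a_i^t$, the remaining action $\va_{-i}^t$ is still distributed according to $\pi_{-i}(\cdot|\tau_{-i}^t)$ since agents sample independently from their own policies. Thus $\mathbb{E}[Q(s^t,\va_{-i}^t,a_i^t)\mid s^t,a_i^t] = \tilde{Q}(s^t,a_i^t)$ directly from the definition \eqref{eq:marginalised_q}. For the PERLA side, the $k$ auxiliary samples $\va_{-i}^{t(j)}$ are drawn i.i.d. from $\pi_{-i}(\cdot|\tau_{-i}^t)$ and are, by construction, independent of $(s^t, a_i^t)$ conditional on $\tau_{-i}^t$; hence each term contributes $\tilde{Q}(s^t,a_i^t)$ and averaging the $k$ identical contributions leaves $\tilde{Q}(s^t,a_i^t)$ unchanged. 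Combining the two conditional evaluations with the tower property and summing over $t$ gives $\mathbb{E}[\vg_i^C] = \mathbb{E}[\vg_i^P]$.

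The only subtlety, and the one place I would be careful, is ensuring that the auxiliary samples in the PERLA estimator do not perturb the on-policy trajectory distribution: the rollout in Algorithm \ref{alg:one} uses the actual joint action $\boldsymbol{a}^t$, while $\hat{Q}$ is formed from fresh, independent samples used only inside the critic evaluation. Once this independence is stated cleanly, the equality of conditional expectations is immediate and the result follows without requiring $Q$ to be the true action-value function, which is exactly what ``possibly imperfect critic'' in the theorem statement allows.
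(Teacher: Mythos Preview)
Your proposal is correct and follows essentially the same route as the paper: condition on $(s^t,a_i^t)$ via the tower property, observe that both $Q(s^t,\va_{-i}^t,a_i^t)$ and the Monte-Carlo average $\hat{Q}(s^t,a_i^t)$ have conditional expectation $\tilde{Q}(s^t,a_i^t)$, and conclude. Your version is somewhat more explicit about the independence of the auxiliary samples from the rollout and about the policy factorisation, but the underlying argument is identical.
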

Hence, whenever the critic provides the true Q-value, the PERLA estimator is an unbiased estimate of the policy gradient (which follows from Theorem \ref{th:mapg}). However, although the CT-DE and PERLA estimators have the same expectations, the PERLA estimator enjoys significantly lower variance. As in \citep{kuba2021settling}, we analyse the excess variance the two estimators have over the decentralised estimator. First define by $B_i$  the upper bound on the gradient norm of $i$th agent, i.e. $B_{i} = \sup_{s, \va}\left|\left|\nabla_{\theta^{i}}\log\pi_{i}\left( a_{i}|s \right) \right|\right|$ and by $C $ the upper bound on the Q-function, i.e. $C = \sup_{s,\va} Q(s, \va)$.
We now present two theorems showing the effectiveness of PERLA estimator for policy gradients.
\begin{restatable}[]{theorem}{sharingdtbound} \label{th:sharingtdbound}
Given true Q-values, the difference in variances between the decentralised and PERLA estimators admits the following bound:
\begin{equation*}
    \textrm{Var}(\vg_{i}^{\textrm{P}}) - \textrm{Var}(\vg_{i}^{\textrm{D}}) \le  \frac{1}{k} \frac{B_i^2 C^2}{1-\gamma^2} .
\end{equation*}
\end{restatable}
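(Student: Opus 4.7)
The plan is to convert the excess variance of $\vg_i^{\textrm{P}}$ over $\vg_i^{\textrm{D}}$ into an expected conditional variance via the law of total variance, and then exploit the independence of the PERLA resamples across time. First I would let $\mathcal{F}$ denote the $\sigma$-algebra generated by the on-policy rollout $(s^t, a_i^t, \va_{-i}^t)_{t \ge 0}$, so that $\vg_i^{\textrm{D}}$ is $\mathcal{F}$-measurable (the true $\tilde{Q}$ is a deterministic function of $(s^t, a_i^t)$). Because the auxiliary samples $\va_{-i}^{t(j)}$ in Algorithm~\ref{alg:one} are drawn independently and afresh from $\vpi_{-i}(\cdot|\tau_{-i}^t)$, conditioning on $\mathcal{F}$ gives $\mathbb{E}[\hat{Q}(s^t, a_i^t)\mid\mathcal{F}] = \tilde{Q}(s^t, a_i^t)$, whence $\mathbb{E}[\vg_i^{\textrm{P}}\mid\mathcal{F}] = \vg_i^{\textrm{D}}$. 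The law of total variance then yields
\[
\textrm{Var}(\vg_i^{\textrm{P}}) = \mathbb{E}\bigl[\textrm{Var}(\vg_i^{\textrm{P}}\mid\mathcal{F})\bigr] + \textrm{Var}(\vg_i^{\textrm{D}}),
\]
so the theorem reduces to bounding $\mathbb{E}[\textrm{Var}(\vg_i^{\textrm{P}}\mid\mathcal{F})]$.

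Next I would expand $\vg_i^{\textrm{P}} - \vg_i^{\textrm{D}} = \sum_{t\ge 0} \gamma^t Y_t Z_t$, where $Y_t := \hat{Q}(s^t, a_i^t) - \tilde{Q}(s^t, a_i^t)$ is a scalar and $Z_t := \nabla_{\vtheta_i}\log\pi_i(a_i^t|\tau_i^t)$ is a vector. Conditional on $\mathcal{F}$, each $Z_t$ is deterministic and each $Y_t$ is mean-zero; moreover the $Y_t$'s are mutually independent across $t$ because the auxiliary samples used to form $\hat{Q}$ at different time steps are drawn independently. Together these facts imply that the conditional covariance of $\vg_i^{\textrm{P}}$ equals $\sum_t \gamma^{2t}\,\textrm{Var}(Y_t\mid\mathcal{F})\,Z_t Z_t^{\top}$, and taking its trace gives
\[
\textrm{Var}(\vg_i^{\textrm{P}}\mid\mathcal{F}) = \sum_{t=0}^{\infty} \gamma^{2t}\,\textrm{Var}(Y_t\mid\mathcal{F})\,\|Z_t\|^2.
\]
The Monte-Carlo identity yields $\textrm{Var}(Y_t\mid\mathcal{F}) = \tfrac{1}{k}\textrm{Var}(Q(s^t,a_i^t,\va_{-i})\mid s^t,a_i^t) \le C^2/k$ (since $|Q|\le C$), while $\|Z_t\|^2 \le B_i^2$ by assumption. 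Summing the geometric series $\sum_t \gamma^{2t} = 1/(1-\gamma^2)$ and taking outer expectations then delivers the claimed bound $B_i^2 C^2 / \bigl(k(1-\gamma^2)\bigr)$.

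The main obstacle --- more a bookkeeping subtlety than a real difficulty --- is isolating the sources of randomness so that the law of total variance applies cleanly. One must argue that the PERLA resamples are independent both of the on-policy joint actions $\va_{-i}^t$ (beyond what is needed to condition on the observations $\tau_{-i}^t$) and of the resamples at other time steps; this is a property of the sampling protocol in Algorithm~\ref{alg:one} rather than a formal property of the estimator, and should be flagged explicitly before the decomposition is invoked. Once that independence is in hand, the remainder of the argument is a direct covariance decomposition plus a geometric-series calculation, in the spirit of the variance analysis in \cite{kuba2021settling}.
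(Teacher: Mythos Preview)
Your proposal is correct and follows essentially the same route as the paper: apply the law of total variance conditioning on the on-policy trajectory, observe that the conditional expectation of $\vg_i^{\textrm{P}}$ equals $\vg_i^{\textrm{D}}$, reduce the conditional variance to a diagonal sum via independence of the resamples across time, and finish with the Monte--Carlo variance identity, the bounds $\lVert\nabla_{\vtheta_i}\log\pi_i\rVert\le B_i$ and $|Q|\le C$, and the geometric series. The only cosmetic differences are that the paper conditions on $(s^t,a_i^t)_{t\ge 0}$ rather than your larger $\mathcal{F}$ (the extra $\va_{-i}^t$ you include is immaterial since the fresh resamples are independent of it) and works component-wise before summing, whereas you go straight to the trace; your explicit flagging of the cross-time independence of the $Y_t$'s is a point the paper leaves implicit.
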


\begin{restatable}[]{theorem}{ctdedtbound} \label{th:ctdedtbound}
Given true Q-values, the difference in variances between the decentralised and CT-DE estimators admits the following bound:
\begin{equation*}
    \textrm{Var}(\vg_{i}^{\textrm{C}}) - \textrm{Var}(\vg_{i}^{\textrm{D}}) \le \frac{B_i^2 C^2}{1-\gamma^2}. 
\end{equation*}
\end{restatable}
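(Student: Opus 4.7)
The plan is to decompose the excess of $\vg_i^{\textrm{C}}$ over $\vg_i^{\textrm{D}}$ as a sum of per-timestep noise terms that form a martingale difference sequence, then read off the variance from an orthogonality identity. Concretely, write $D_t := Q(s^t, \va^t) - \tilde Q(s^t, a_i^t)$ and $\phi_t := \nabla_{\vtheta_i}\log \pi_i(a_i^t | \tau_i^t)$, so that
\[
  \vg_i^{\textrm{C}} - \vg_i^{\textrm{D}} = \sum_{t=0}^{\infty} \gamma^t D_t \phi_t.
\]
Let $\mathcal F_t := \sigma(s^{0:t}, \va^{0:t-1}, a_i^t)$ be the natural filtration that excludes $\va_{-i}^t$. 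By the definition $\tilde Q(s, a_i) = \mathbb E_{\boldsymbol{\pi}_{-i}}[Q(s, a_i, \va_{-i})]$ one has $\mathbb E[D_t | \mathcal F_t] = 0$, and since $\phi_t$ is $\mathcal F_t$-measurable the sequence $\{\gamma^t D_t \phi_t\}$ is a martingale difference sequence with respect to $\{\mathcal F_t\}$.

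The next steps compute the second moment of this sum. Conditioning on $\mathcal F_{\max(s, t)}$ annihilates every cross term $\mathbb E[D_s \phi_s^{\top} D_t \phi_t]$ with $s \neq t$ (the earlier increment becomes measurable while the later one has conditional mean zero), yielding
\[
  \mathbb E\|\vg_i^{\textrm{C}} - \vg_i^{\textrm{D}}\|^2 = \sum_{t=0}^{\infty} \gamma^{2t}\,\mathbb E\!\left[\|\phi_t\|^2\, \mathbb E[D_t^2 | \mathcal F_t]\right].
\]
The inner conditional expectation equals $\textrm{Var}_{\va_{-i}^t \sim \boldsymbol{\pi}_{-i}}(Q(s^t, a_i^t, \va_{-i}^t))$, which is at most $C^2$ because $Q \le C$, while $\|\phi_t\|^2 \le B_i^2$ by the definition of $B_i$. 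Summing the geometric series $\sum_{t \ge 0}\gamma^{2t} = (1-\gamma^2)^{-1}$ gives $\mathbb E\|\vg_i^{\textrm{C}} - \vg_i^{\textrm{D}}\|^2 \le B_i^2 C^2/(1-\gamma^2)$.

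The main obstacle is converting this second-moment bound into the stated excess-variance bound. Theorem~\ref{th:equal_exp} gives $\mathbb E[\vg_i^{\textrm{C}}] = \mathbb E[\vg_i^{\textrm{D}}]$, hence
\[
  \textrm{Var}(\vg_i^{\textrm{C}}) - \textrm{Var}(\vg_i^{\textrm{D}}) = \mathbb E\|\vg_i^{\textrm{C}} - \vg_i^{\textrm{D}}\|^2 + 2\,\textrm{Cov}(\vg_i^{\textrm{D}}, \vg_i^{\textrm{C}} - \vg_i^{\textrm{D}}),
\]
and the cross-covariance is subtle: for $s > t$ the decentralised summand $\tilde Q(s^s, a_i^s)\phi_s$ depends on $\va_{-i}^t$ through the transition kernel and is generally not orthogonal to $D_t \phi_t$. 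I would handle this by inserting $\mathbb E[\,\cdot\, | \mathcal F_t]$ inside the double sum—killing all $s \le t$ terms by $\mathbb E[D_t | \mathcal F_t] = 0$—and treating each residual $s > t$ contribution as a covariance, under $\boldsymbol{\pi}_{-i}(\cdot | \tau_{-i}^t)$, of the zero-mean $D_t$ against a bounded conditional mean of the future decentralised block, so that these residuals can be absorbed into the same $(1-\gamma^2)^{-1}$ geometric budget. This bookkeeping mirrors the analogous cross-term vanishing in the proof of Theorem~\ref{th:sharingtdbound}, which is cleaner there because PERLA's fresh resampling decouples critic noise from the trajectory.
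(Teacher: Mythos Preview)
Your martingale-difference computation of $\mathbb E\|\vg_i^{\textrm C}-\vg_i^{\textrm D}\|^2$ is fine and lands on the same per-timestep sum the paper gets. The route to the excess-variance statement, however, is different, and the step you flag as the ``main obstacle''---the cross term $2\,\textrm{Cov}(\vg_i^{\textrm D},\vg_i^{\textrm C}-\vg_i^{\textrm D})$---is a genuine gap in your proposal. Your plan to ``absorb'' the $s>t$ residuals into the $(1-\gamma^2)^{-1}$ budget is not an argument: those terms have no a~priori sign, and nothing in your filtration setup controls them.

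The paper never confronts this cross term because it applies the law of total variance directly. Conditioning (componentwise) on the whole collection $(s^t,a_i^t)_t$ and using $\mathbb E[Q(s^t,\va^t)\mid s^t,a_i^t]=\tilde Q(s^t,a_i^t)$, one obtains $\mathbb E[\vg_i^{\textrm C}\mid (s^t,a_i^t)_t]=\vg_i^{\textrm D}$; hence the two ``variance of conditional mean'' pieces coincide and
\[
\textrm{Var}(\vg_i^{\textrm C})-\textrm{Var}(\vg_i^{\textrm D})=\mathbb E\big[\textrm{Var}\big(\vg_i^{\textrm C}\,\big|\,(s^t,a_i^t)_t\big)\big],
\]
which is then bounded termwise by $B_i^2C^2/(1-\gamma^2)$ exactly as you do. In your language, this says $\vg_i^{\textrm D}$ is the $L^2$ projection of $\vg_i^{\textrm C}$ onto $\sigma((s^t,a_i^t)_t)$, so the cross-covariance is \emph{identically zero} and your second-moment bound already \emph{is} the excess variance---no residual bookkeeping is needed. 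That projection identity is the missing idea in your argument.

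Your concern that $s^{t+1}$ carries information about $\va_{-i}^t$ through the dynamics (so that conditioning on the full state sequence may distort the law of $\va_{-i}^t$) is legitimate, and it bites the paper's conditioning just as much as yours; the paper does not address it and simply treats $\mathbb E[Q(s^t,\va^t)\mid s^t,a_i^t]=\tilde Q(s^t,a_i^t)$ as given.
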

Therefore, we can see that with $k=1$ the bound on excess variance of the PERLA estimator is the same as for the CT-DE estimator, but as $k \to \infty$, the variance of PERLA estimator matches the one of the fully decentralised estimator. However, this is done while still maintaining a centralised critic, unlike in the fully decentralised case. It turns out that the presence of centralised critic is critical, as it allows us to guarantee the converge of an algorithm to a local optimum with probability 1. The result is stated by the next Theorem.

\begin{restatable}[]{theorem}{convergence} \label{th:convergence}
Under the standard assumptions of stochastic approximation theory \citep{konda1999actor}, an Actor-Critic algorithm using $\vg_i^P$ or $\vg_i^C$ as a policy gradient estimator, converges to a local optimum with probability 1, i.e. 
\begin{equation*}
   P \left( \lim_{k \to \infty} \lVert \nabla_{\vtheta_i}\mathcal{J}(\vtheta^k) \rVert = 0 \right) = 1,
\end{equation*}
where $\vtheta^k$ is the value of vector $\vtheta$ obtained after the $k$th update following the policy gradient.
\end{restatable}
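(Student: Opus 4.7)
The plan is to cast the algorithm as a two-timescale stochastic approximation in the style of Konda and Tsitsiklis and verify that the conditions of their convergence theorem are met for both the PERLA and CT-DE gradient estimators. I would organise the argument so that the standard (critic) timescale runs faster than the actor timescale, so on the slow timescale the critic is effectively tracking $Q$ under the current joint policy $\vpi_{\vtheta}$. The standard Konda--Tsitsiklis hypotheses that must be checked are: (i) the critic parameterisation spans a sufficiently rich subspace and Bellman residuals vanish asymptotically along the fast timescale, so that the critic's output converges to (a projection of) the true $Q$; (ii) the actor stepsizes $\{\alpha_k\}$ satisfy the Robbins--Monro conditions $\sum_k \alpha_k = \infty$, $\sum_k \alpha_k^2 < \infty$ and are dominated by the critic stepsizes; (iii) the per-step gradient estimate is conditionally unbiased with respect to the true policy gradient (given the current critic), and has bounded conditional variance.

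With this scaffolding in place, the core new content is verifying (iii) for $\vg_i^{\textrm{P}}$. Unbiasedness follows immediately from Theorem~\ref{th:equal_exp}, which tells us $\E[\vg_i^{\textrm{P}}] = \E[\vg_i^{\textrm{C}}]$, combined with Theorem~\ref{th:mapg} which identifies this expectation with $\nabla_{\vtheta_i}\mathcal{J}(\vtheta)$ whenever the critic equals the true $Q$. Bounded variance is secured by Theorem~\ref{th:sharingtdbound}: the excess variance over the decentralised estimator is uniformly bounded by $k^{-1} B_i^2 C^2/(1-\gamma^2)$, and the decentralised estimator itself has finite variance under the standing boundedness assumptions on $\nabla_{\vtheta_i}\log \pi_i$ and $Q$. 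The same variance bound with $k=1$ (and Theorem~\ref{th:ctdedtbound}) handles the CT-DE case. I would then decompose $\vg_i^{\textrm{P}} = \nabla_{\vtheta_i}\mathcal{J}(\vtheta) + M_{k+1} + \epsilon_k$, where $M_{k+1}$ is a martingale-difference noise (coming from the Monte Carlo samples $\va_{-i}^{(j)}$, the trajectory, and the policy samples) and $\epsilon_k$ is an asymptotically vanishing bias induced by critic tracking error; both are standard ingredients of the Konda--Tsitsiklis framework.

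Once (i)--(iii) are established, I would invoke the ODE method: the slow-timescale iterate asymptotically follows the flow $\dot{\vtheta} = \nabla_{\vtheta}\mathcal{J}(\vtheta)$, whose equilibria are exactly the points at which $\lVert \nabla_{\vtheta_i}\mathcal{J}(\vtheta)\rVert = 0$. Standard results (e.g.~Borkar, Chapter~6) then give almost-sure convergence of $\vtheta^k$ to the set of such stationary points, which is the desired conclusion
\[
   \P\!\left( \lim_{k \to \infty} \lVert \nabla_{\vtheta_i}\mathcal{J}(\vtheta^k) \rVert = 0 \right) = 1.
\]

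The main obstacle is the Monte Carlo noise introduced by the finite-$k$ sampling of $\va_{-i}$ inside $\hat{Q}$, which is an extra source of randomness beyond what appears in the classical analyses. The delicate step is showing that, when the fresh samples $\{\va_{-i}^{(j)}\}_{j=1}^k$ are drawn from $\vpi_{-i,\vtheta^k}$ at iteration $k$, the resulting noise is conditionally mean-zero given the natural filtration and has a conditional second moment bounded by a deterministic constant (uniformly in $k$). Given the sampling is carried out with the current parameters, this is exactly the content guaranteed by Theorem~\ref{th:equal_exp} and Theorem~\ref{th:mcfvar}, so the Monte Carlo contribution collapses into the martingale-difference term $M_{k+1}$ and the Konda--Tsitsiklis machinery applies without modification. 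The CT-DE case is then a special case of the argument above with $k=1$, completing the proof for both estimators.
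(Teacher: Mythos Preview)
Your proposal is correct, but the paper takes a much shorter route. Rather than re-verifying the Konda--Tsitsiklis conditions (unbiasedness, bounded second moments, martingale-difference structure) directly for the multi-agent estimators, the paper simply observes that the multi-agent setup with a centralised critic \emph{is} a single-agent Actor-Critic in disguise: concatenate $\vtheta = (\vtheta_1^\top,\dots,\vtheta_N^\top)^\top$, treat the joint policy $\vpi(\va\mid s) = \prod_i \pi_i(a_i\mid s)$ as a single policy over the joint action $\va$, and note that $\nabla_{\vtheta_i}\log\vpi = \nabla_{\vtheta_i}\log\pi_i$. The stacked estimator $\vg^C$ is then exactly the single-agent policy gradient estimator, so convergence for $\vg^C$ follows immediately from the single-agent result of Konda and Tsitsiklis; for $\vg^P$ the paper just appeals to Theorem~\ref{th:equal_exp} (same expectation as $\vg^C$) and says the same argument applies. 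Your approach is more laborious but arguably more honest: it makes explicit that the extra Monte Carlo randomness in $\hat{Q}$ must be absorbed into the martingale-difference term and that its conditional second moment stays bounded (which you secure via Theorems~\ref{th:mcfvar} and~\ref{th:sharingtdbound}), whereas the paper's reduction leaves this implicit. The reduction buys brevity; your route buys transparency about exactly which conditions the PERLA sampling has to satisfy.
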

\begin{proofsketch}
We present a proof sketch here and defer the full proof to Appendix \ref{ap:convergence}. \\
The proof consists of showing that a multi-agent Actor-Critic algorithm using policy gradient estimate $\vg_i^P$ or $\vg_i^C$ is essentially a special case of single-agent Actor-Critic.
\end{proofsketch}
Note that because the decentralised critic does not allow us to query the state-action-value for joint action of all agents, a decentralised actor-critic using $\vg_i^D$ as policy gradient estimate is not equivalent to the single-agent version and we cannot establish convergence for it. Therefore,
our PERLA estimator enjoys both the low variance property of the decentralised estimator and the convergence guarantee of the CT-DE one. Additionally, having a centralised critic yields better performance in practice in environments with strong interactions between agents \citep{kok2004sparse}.

\section{Experiments}\label{section:experiments}
We ran a series of experiments in \textit{Large-scale Matrix Games} \citep{son2019qtran}, \textit{Level-based Foraging} (LBF) \citep{christianos2020shared}, \textit{Multi-agent Mujoco} \citep{de2020deep} and the \textit{StarCraft II Multi-agent Challenge} (SMAC) \citep{samvelyan2019starcraft}\footnote{The specific maps/variants used of each of these environments in given in Sections \ref{sec:lbf_experiments_appendix} and \ref{section:smac_details_appendix} of the Appendix)} to test if PERLA: \textbf{1.} Improves overall performance of MARL learners. \textbf{2.} Enables  sample efficiency when the the number of agents is scaled up. \textbf{3.} Reduces variance of value function estimates. In all tasks, we compared the performance of PERLA MAPPO against MAPPO \citep{yu2021surprising}. We report average training results across multiple scenarios/maps in LBF and SMAC. Detailed performance comparisons are deferred to the Appendix. Lastly, we ran a suite of ablation studies which we deferred to the Appendix.

We implemented PERLA on top of the MAPPO implementation provided in the codebase accompanying the MARL benchmark study of \citet{papoudakis2021benchmarking}. Hyperparameters were tuned using simple grid-search, the values over which we tuned the hyperparameters are presented in Table \ref{tab:hyperparams} in the Appendix. All results are means over $3$ random seeds unless otherwise stated. In plots dark lines represent the mean across the seeds while shaded areas represent 95\% confidence intervals.

\subsection{Performance Analysis}\label{sec:performance}
\begin{figure}[t]
    \centering
    \includegraphics[width=0.31\linewidth]{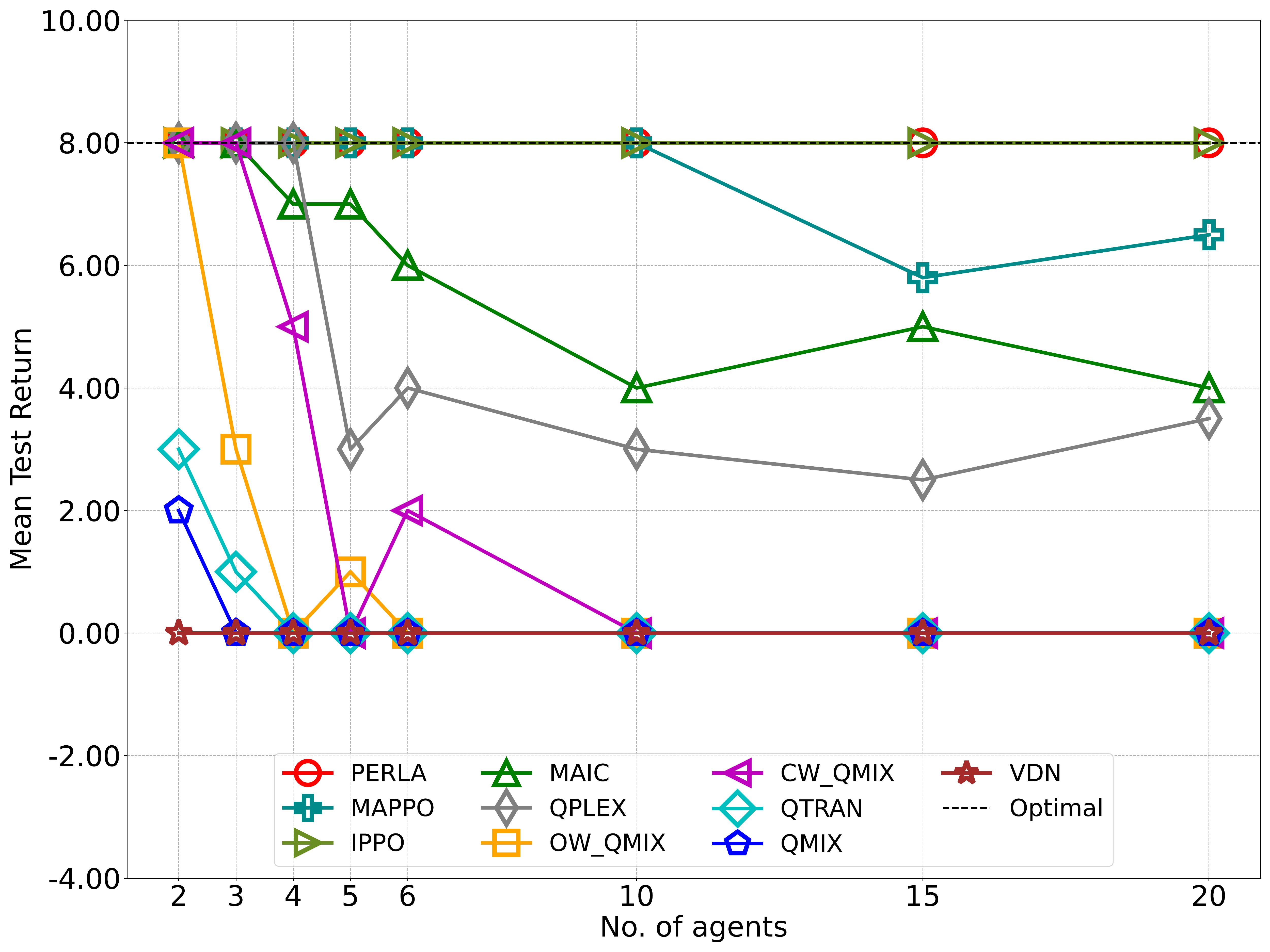}
    \includegraphics[width=0.31\linewidth]{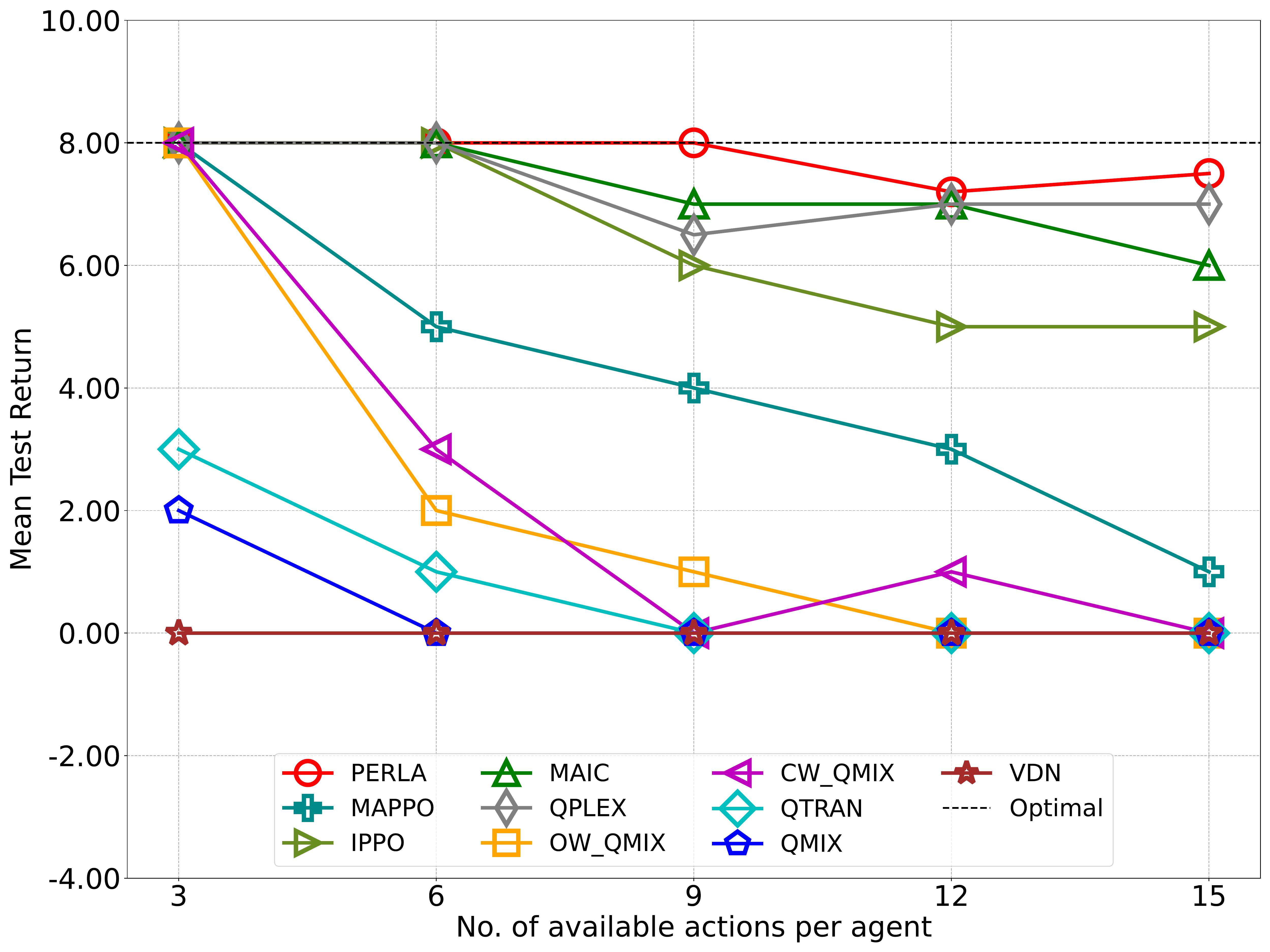}
    \includegraphics[width=0.3375\textwidth]{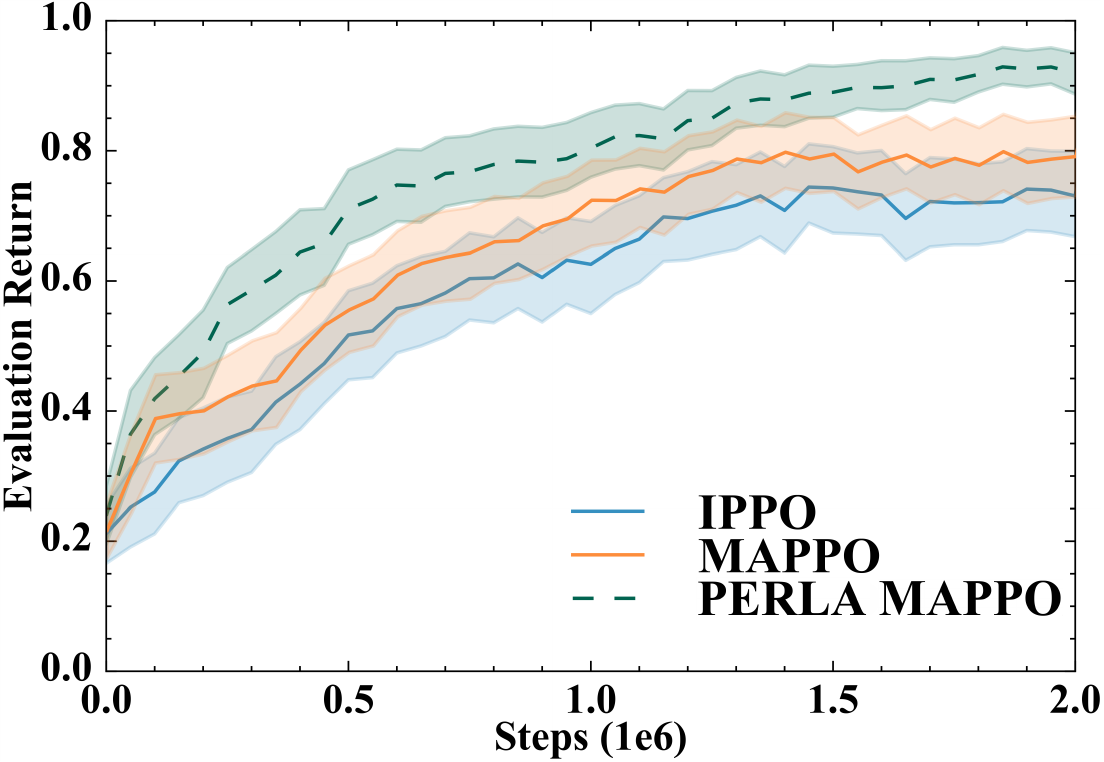}
    \caption{\emph{Left} and \emph{centre}: scaling with more agents and larger action space, respectively, in Cooperative Matrix Game; in both cases PERLA is able to maintain optimal performance while other algorithms' performances degrade. \emph{Right:} Learning curves of mean evaluation return of MAPPO and PERLA MAPPO averaged over all tested LBF maps. PERLA improves sample efficiency (better performance faster) and quality of the final policy.}
    \label{figure:performance_matrix_games_and_lbf}
\end{figure}

\textbf{\textit{Large-Scale Matrix Games.}} To demonstrate PERLA's ability to handle various reward structures and scale efficiently we first tested its performance in a set of variants of the hard matrix game proposed in  \citep{son2019qtran} (Appendix \ref{section:son_matrix_game}). This game contains multiple stable points and a strongly attracting equilibrium \citep{son2019qtran} (all agents selecting action $A$). In our variants, we scaled up the number of agents, or we increased the size of the action space (Appendix \ref{section:son_matrix_game}). In both cases, it is crucial to accurately account for the behaviour of other agents in the system. For instance, even if the joint-policy has converged to the optimal solution, if even one agent samples an exploratory action, it can be strongly destabilising due to its high penalty (reward of $0$ or $-12$ as opposed reward of $8$ for optimal joint-action). To avoid these issues, it is crucial to base updates on the joint-policy rather than samples of the joint-action.

We used $500$ training iterations and average the results of 10 random seeds in each method. As shown in Figure ~\ref{figure:performance_matrix_games_and_lbf}, policy-based methods (MAPPO and IPPO) and leading value-based methods (MAIC, QPLEX \citep{wang2020qplex} and WQMIX \citep{rashid2020weighted}) achieve optimal performance in the initial settings ($2$ agents with $3$ available actions), while other algorithms achieve suboptimal outcomes. When scaling with more agents and larger action space, only PERLA can maintain optimal performance across almost all variations. As shown, MAPPO completely fails when we scale the actions-space, going from a return of 8 to 0.5. PERLA MAPPO, however is robust and attains the highest return of all tested algorithms at about 7.75.

\textit{\textbf{Level-based Foraging.}}
Figure~\ref{figure:performance_matrix_games_and_lbf} shows learning curves averaged across all LBF maps that we ran (the full list of maps is given in the Appendix \ref{sec:lbf_experiments_appendix}). As shown in the plot, PERLA significantly improves base MAPPO, both in learning speed and the quality of the policy at the end of training. For example, it takes PERLA MAPPO about $800,000$ interactions with the environment to achieve a mean evaluation return of $0.8$, whereas vanilla MAPPO does not achieve such a perfomance level even by the end of training. Moreover, PERLA MAPPO is able to attain an mean evaluation return of $0.9$ by the end of training, vanilla MAPPO attains an mean evaluation return just under $0.8$. Furthermore, as shown in Figure \ref{fig:LBF_IPPO} (Appendix \ref{sec:lbf_experiments_appendix})) in the map requiring the highest level of coordination between agents \verb|Foraging-15x15-8p-1f-coop-v2| (8 agents must cooperate to attain reward), PERLA MAPPO manages to learn, and achieves an evaluation return of $0.7$ while MAPPO fails entirely. This shows that PERLA enables vanilla MAPPO to scale to a high number of agents in an environment with sparse and high variance reward function.

\textit{\textbf{StarCraft II Multi-agent Challenge.}}
\begin{figure*}[!t]
    \begin{center}
    \includegraphics[width=\linewidth]{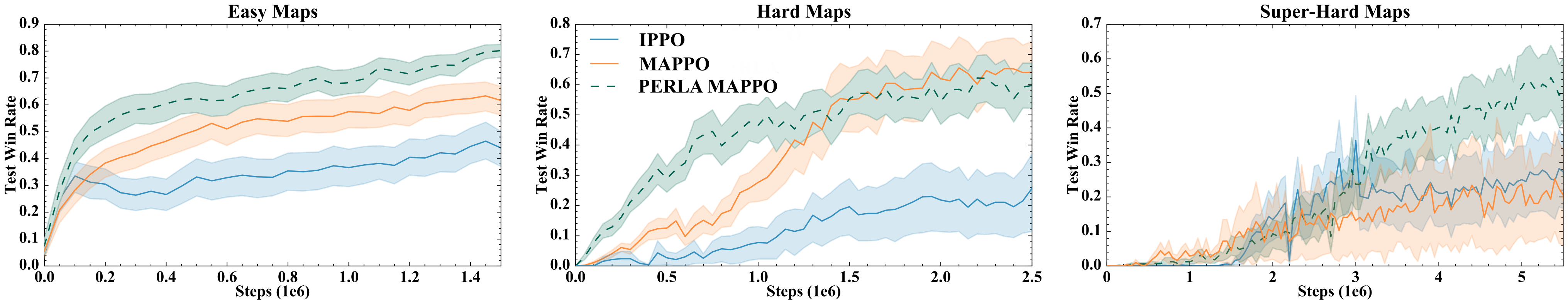}
    \end{center}
    \caption{Average performance over \emph{all} SMAC maps for IPPO, PERLA\_IPPO (top); MAPPO,  PERLA\_MAPPO (bottom). PERLA enhances the base method both in terms of sample efficiency (better performance, faster) and final performance.}
    \label{fig:smac_performance_summary}
    \end{figure*}
Figure \ref{fig:smac_performance_summary} shows performance of MAPPO and PERLA MAPPO over a wide range of SMAC maps from all difficulty levels. At regular intervals during training, we ran $10$ evaluation episodes and tracked the median win-rate. The learning curves are then generated by computing the mean of these win rates (disaggregated curves for each map are available in Figure ~\ref{figure:smac_all_maps_learning_curves} in Appendix ~\ref{section:smac_all_maps}). SMAC maps are richly diverse and vary along several dimensions such as the number of agents to control, density of environment reward, degree of coordination required, and (partial)-observability. Therefore, aggregated and averaged performance over all maps gives us a fairly robust understanding of the effects of PERLA.
As with LBF, PERLA enhances the performance MAPPO. PERLA MAPPO is more sample efficient and converges to better overall policies (within the training budget). 

\begin{figure}[t]
    \centering
    \includegraphics[width=0.32\textwidth]{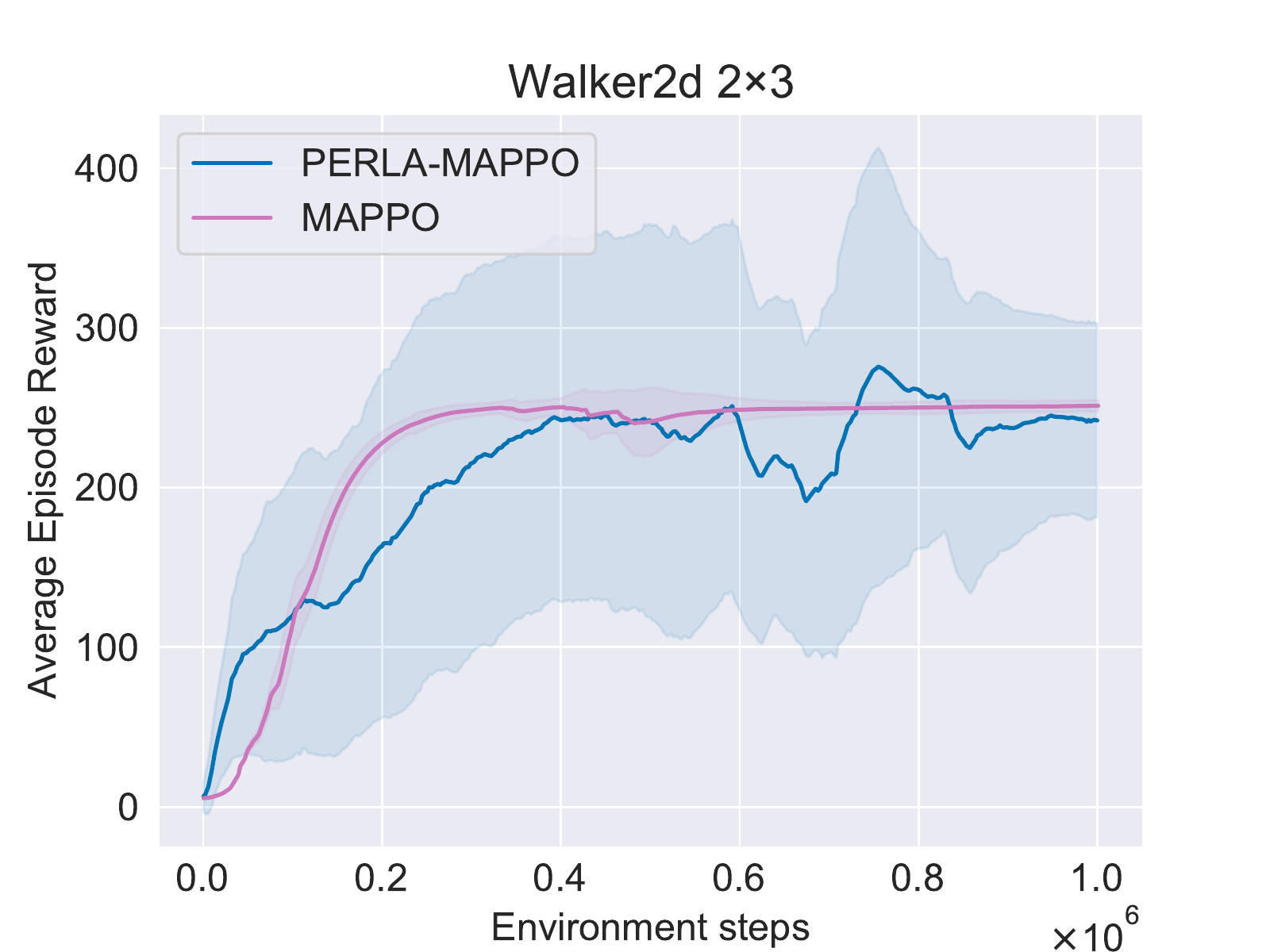}
    \includegraphics[width=0.32\textwidth]{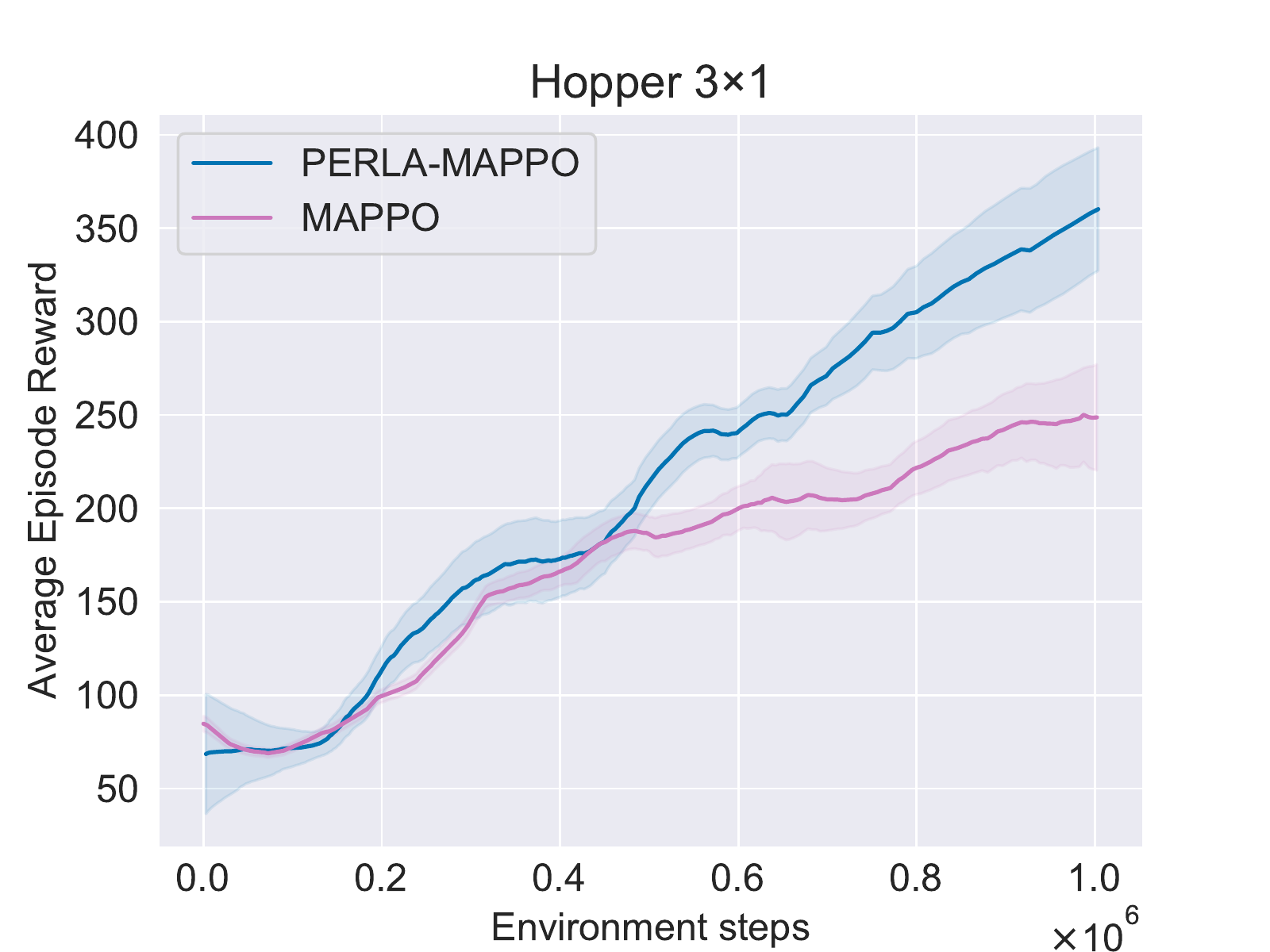}
    \includegraphics[width=0.32\textwidth]{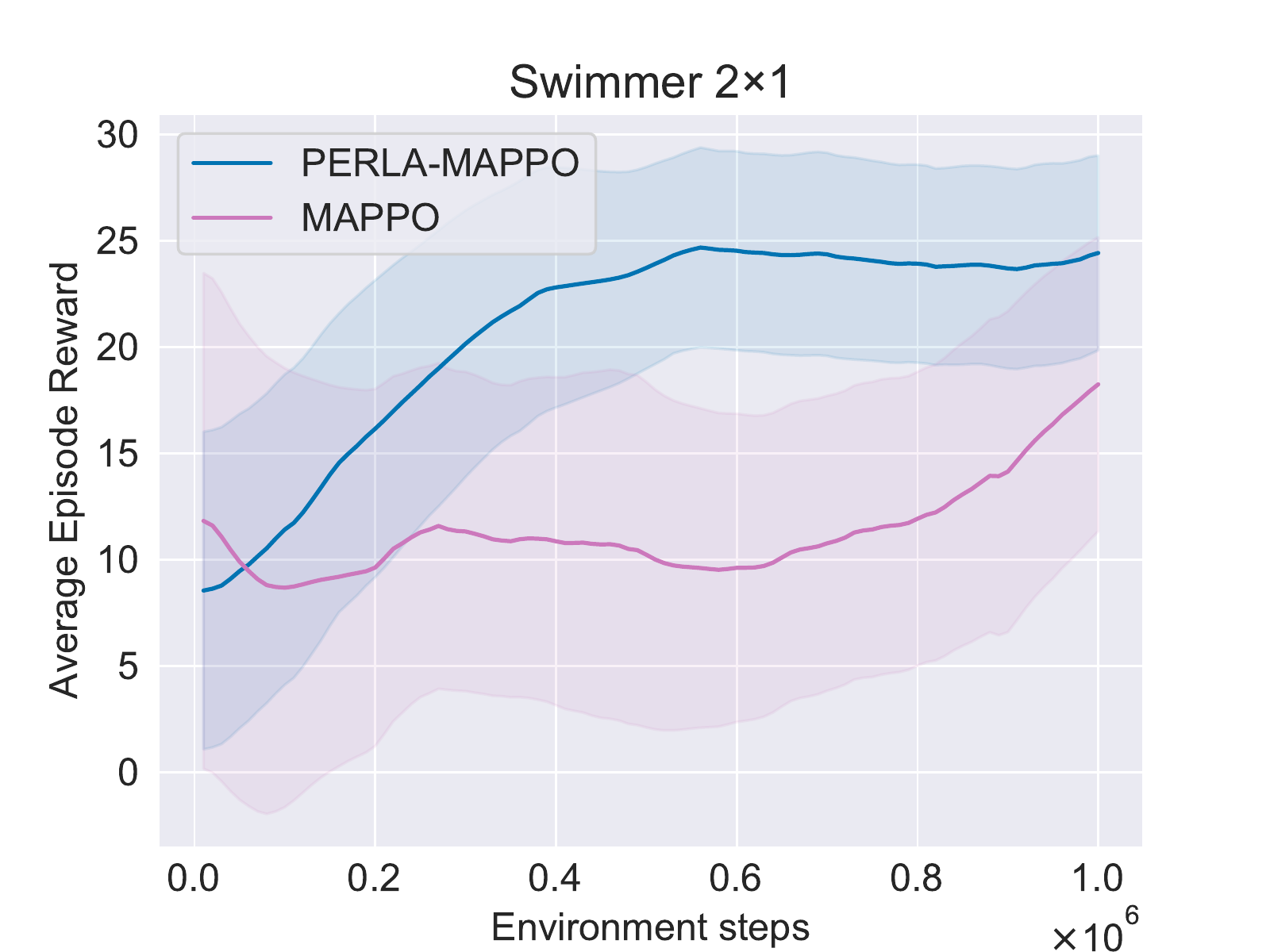} 
    \caption{Comparisons of PERLA MAPPO versus MAPPO on three Multi-agent Mujoco tasks. PERLA MAPPO learns faster and converges to superior policies.}
    \label{alg-Performance}
\end{figure}
\textit{\textbf{Multi-agent Mujoco.}}
To study PERLA's capabilities in complex settings that require both scalability and coordination, we compared its performance with vanilla MAPPO on three tasks in Multi-agent Mujoco: Walker 2$\times$3, Hopper 3$\times$1, and Swimmer 2$\times$1.
In Figure \ref{alg-Performance}, we report learning curves averaged over $6$ seeds of each algorithm. As can be seen, PERLA MAPPO outperforms or equals MAPPO on all three tasks. By enabling agents to maintain estimates that account for other agents’ actions, PERLA achieves more accurate value estimation with the variance reduction, therefore establishing more efficient learning. 

\begin{figure}[t]
    \centering
    \includegraphics[width=0.25\linewidth]{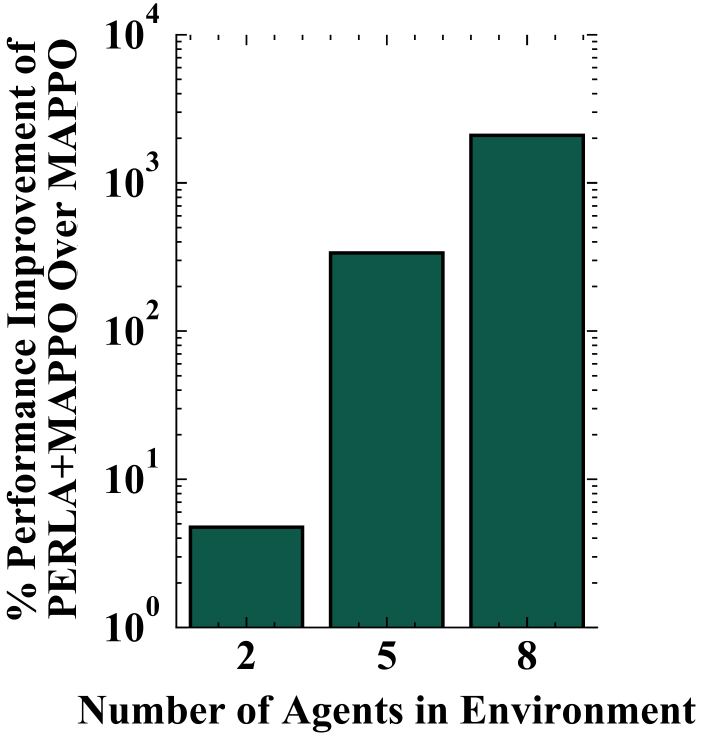}
    \includegraphics[width=0.34\linewidth]{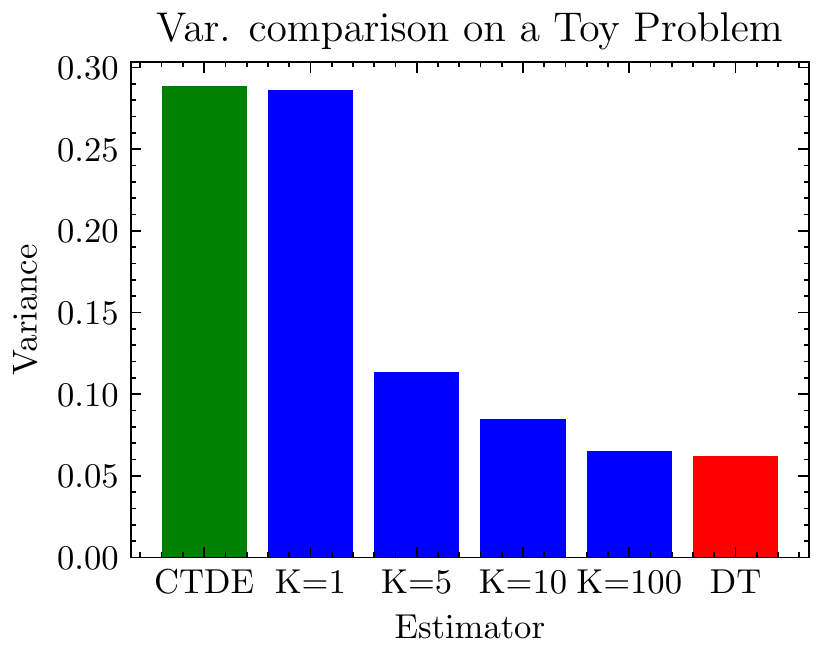}
    \caption{\textbf{Left: } \% performance increase of policy trained by PERLA MAPPO over policy trained by MAPPO as a function number of agents ($N$). For $N=8$, PERLA MAPPO yields over $+1000\%$ performance gains over MAPPO. \textbf{Right:} Comparison of Policy Gradient Variance on a Toy Problem for CTDE, DT and PERLA estimators. $K=n$ denotes a PERLA estimator using $n$ samples.} 
    \label{figure:analysis_experiments}
\end{figure}

     

\subsection{Scaling Analysis}\label{sec:scaling}
Scaling efficiently to large systems  (i.e. systems with many agents and large action spaces) is a major challenge in MARL.
In Section 1, we claimed the PERLA framework enables MARL to scale efficiently in terms of the number of samples.  To test this claim, we investigated PERLA's scaling ability in our large scale matrix games (described above) and LBF. In matrix games, we demonstrated PERLA's ability to efficiently scale across both dimensions, namely, games varied by (1) the number of agents $N=2,3,4,5,6,10,15,20$ and (ii) the cardinality of the agents' action sets $|\cA_i|=3,6,9,12,15$. In each case, we retained the setup that reward agents with a score of $8$ only when all agents choose the first action. 

We further tested this claim in LBF scenarios with 2, 5, and 8 agent, respectively. We expected to see PERLA MAPPO's advantage increase with the number of agents as the algorithm allows each agent to better account for the actions of other agents in the system. Figure \ref{figure:analysis_experiments} shows PERLA MAPPO's over MAPPO. As shown, PERLA enables monotonic performance gains with the number of agents, yielding over $1000\%$ improvement in systems with $8$ agents.

\subsection{Variance Analysis}
In Sec. 4, we proved that PERLA reduces the variance of VF estimators. To show this empirically,  we constructed a toy problem with three agents. Each agent has a binary action space of $\{0, 1\}$ and plays a uniform policy. The team receives a reward of $1$ if all players play action $0$ and a reward of $3$ if all players play action $1$. If at least one player plays an action different from others, the whole team receives a reward of $0$. As this is a state-less game, the $Q$-function takes only actions as inputs. We consider the case, where the policy of agent 1 is defined by a sigmoid function i.e. $\pi(a_1=1; \theta) = \frac{1}{1 + \exp(- \theta)}$. We repeat the experiment 1000 times and measure the variance of the policy gradient for $\theta=0$ calculated using PERLA (Eq. \ref{eq:perla_grad_est}), CTDE (Eq. \ref{eq:ctde_grad_est}) and DT (Eq. \ref{eq:dt_grad_est}) estimators. In the results shown in Figure \ref{figure:analysis_experiments} we see that as number of samples $k$ increases, the variance of PERLA estimator sharply decreases and almost matches the variance of DT estimator for very large $k$. This is consistent with Theorem \ref{th:sharingtdbound}, where we have proven that as $k$ increases variance of PERLA estimator approaches the variance of DT estimator at a rate of $1/k$. 


\section{Conclusion}
Centralised training is a fundamental paradigm in performant, modern actor-critic MARL algorithms. While it enables agents to coordinate to solve challenging problems, MARL estimators still suffer from high variance. This hinders learning and reduces the sample efficiency of MARL methods. Scalability and efficient learning are key challenges in MARL research. 
%
In this paper, we introduced PERLA, an enhancement tool that induces sample efficient, coordinated learning among MARL agents. Our theory and empirical analyses show that PERLA reduces the variance of VF estimators which is critical for efficient learning. In this way, PERLA enables MARL algorithms to exhibit sample efficient learning and a high degree of scalability.


\section{Acknowledgement}
    Jianhong Wang is fully supported by UKRI Turing AI World-Leading Researcher Fellowship, EP/W002973/1.

\bibliographystyle{plainnat}
\bibliography{perla_neurips_2023}

\clearpage
\appendix
\onecolumn

{\Large{\textbf{Appendix}}}

\section{Reward Structure of Large-Scale Matrix Game}\label{section:son_matrix_game}
Reward matrix of the hard matrix game proposed by \citet{son2019qtran}. The game contains multiple stable points and a strongly attracting equilibrium: $(A, A)$ \citep{son2019qtran}.

\begin{table}[htb]
    \centering
    \label{table:son_matrix_game}
    \begin{tabular}{|c||c|c|c|}
        \hline
        \backslashbox{Agent $1$}{Agent $2$} & $A$ & $B$ & $C$\\
        \hline
        \hline
        $A$ & $8$ & $-12$ & $-12$ \\
        \hline
        $B$ & $-12$ & $0$ & $0$ \\
        \hline
        $C$ & $-12$ & $0$ & $0$ \\
        \hline
    \end{tabular}
\end{table}

An example reward structure of the game with more agents ($3$ agents) is as below. The letters in the left-most column of each panel (starting from the second row) represent the actions of Agent $1$, the letters on the top-row of each panel (starting from the second column) represent the actions of Agent $2$, and the top-left letter of each panel represents the action of Agent $3$. The reward structure for more agents is a straight-forward extrapolation of the below pattern.

\begin{tikzpicture}[every node/.style={anchor=north east,fill=white,minimum width=1.4cm,minimum height=7mm}]
\matrix (mA) [draw,matrix of math nodes]
{
C & A & B & C \\
A & -12 & 0 & 0 \\
B & 0 & 0 & 0 \\
C & 0 & 0 & 0 \\
};

\matrix (mB) [draw,matrix of math nodes] at ($(mA.south west)+(1.1,0.1)$)
{
B & A & B & C \\
A & -12 & 0 & 0 \\
B  & 0 & 0 & 0 \\
C  & 0 & 0 & 0 \\
};

\matrix (mC) [draw,matrix of math nodes] at ($(mB.south west)+(1.1,0.1)$)
{
A & A & B & C \\
A & 8 & -12 & -12 \\
B & -12 & 0 & 0 \\
C & -12 & 0 & 0 \\
};

\draw[dashed](mA.north east)--(mC.north east);
\draw[dashed](mA.north west)--(mC.north west);
\draw[dashed](mA.south east)--(mC.south east);
\end{tikzpicture}

An example of the reward structure of an larger action-space ($4$ actions) is as below. Even larger action-spaces are straight-forward extrapolations of the below pattern.
\begin{table}[htb]
    \centering
    \begin{tabular}{|c||c|c|c|c|}
        \hline
        \backslashbox{Agent $1$}{Agent $2$} & $A$ & $B$ & $C$ & $D$\\
        \hline
        \hline
        $A$ & $8$ & $-12$ & $-12$ & $-12$ \\
        \hline
        $B$ & $-12$ & $0$ & $0$ & $0$ \\
        \hline
        $C$ & $-12$ & $0$ & $0$ & $0$ \\
        \hline
        $D$ & $-12$ & $0$ & $0$ & $0$ \\
        \hline
    \end{tabular}
\end{table}

\clearpage

\section{Ablation Study - Number of Samples Used in Marginalization}
We ran an ablation study over the number of samples required to marginalise $Q_i$ to obtain $\hat{Q}$. We selected three tasks from Level-based Foraging at random: Foraging-5x5-2p-1f-coop, Foraging-10x10-5p-3f, and Foraging-10x10-3p-5f. We ran PERLA\_MAPPO 
on these maps with a range of values for $K$ for three seeds each. Figure \ref{fig:K_ablation} shows the area-under-the-curve averaged across seeds and across the three maps versus $K$. As expected, the general trend is that more samples yields better performance. PERLA\_MAPPO with $K=100$ is significantly better than PERLA\_MAPPO with $K=5$. 
However this trend discontinues beyond $K=250$ --- the algorithm performs slightly worse than $K=100$ indicating that increasing the number of samples does not improve performance.

\begin{figure}[h!]
    \centering
    \includegraphics[width=0.5\columnwidth]{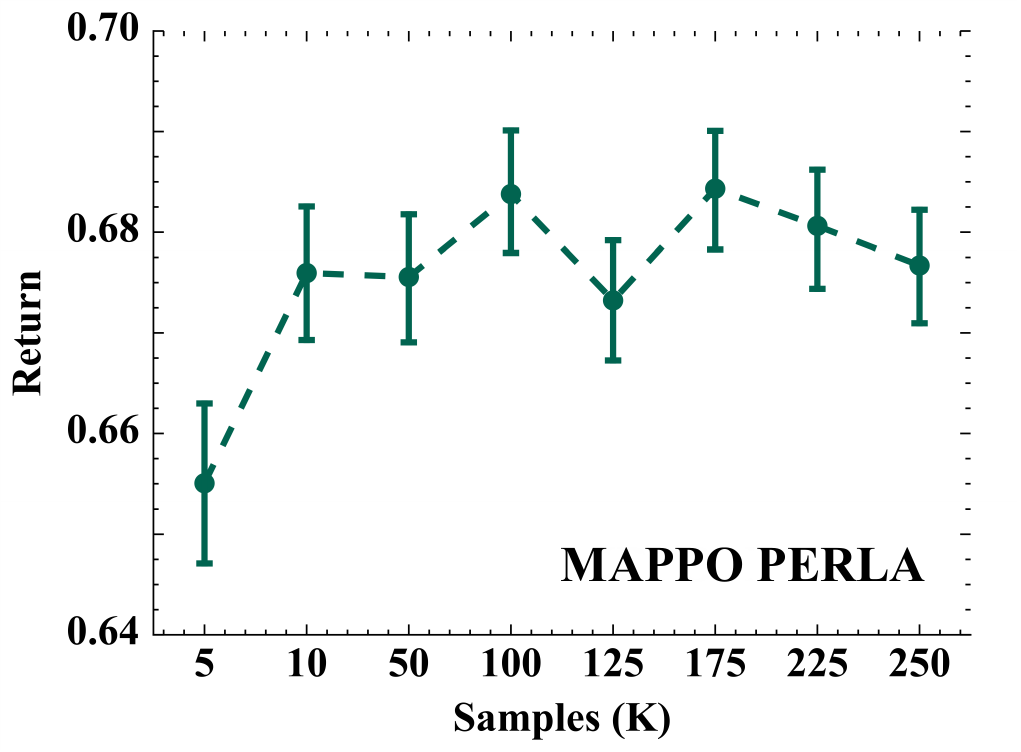}
    \caption{Ablation on $K$, the number of samples used to marginalise $Q_i$. As expected, the overall trend in the plots suggests that more samples are better.}
    \label{fig:K_ablation}
\end{figure}

\clearpage

\section{Learning Curves on Level-Based Foraging}\label{appendix:all_plots}
\begin{figure}[h!]
    \begin{center}
        \includegraphics[width=\textwidth]{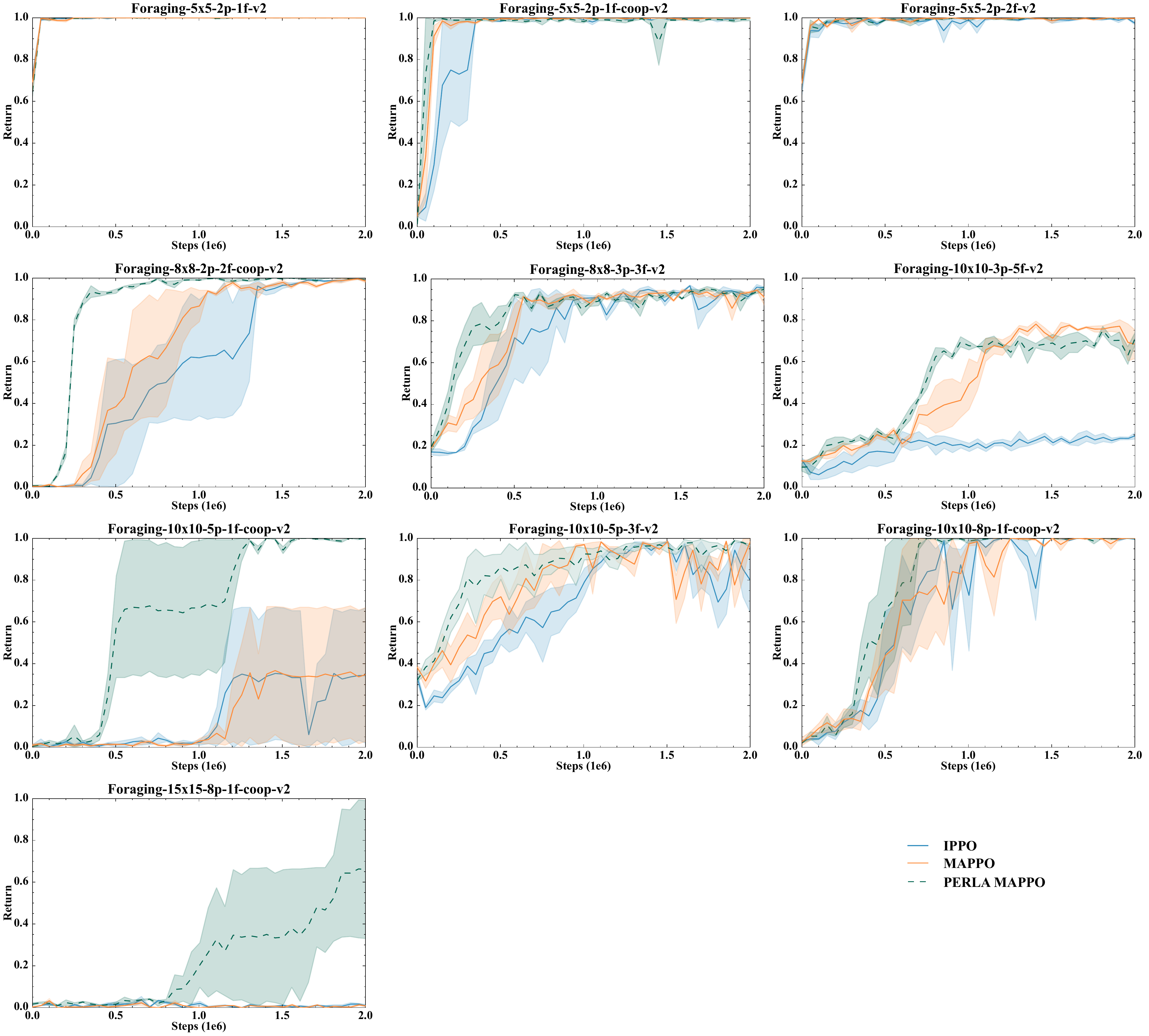}
        \caption{Learning curves of PERLA MAPPO and baselines on all tested \emph{Level-Based Foraging} maps. In all maps, PERLA MAPPO is outperform or is equivalent to the baselines. Moreover, in \emph{Foraging-15x15-8p-1f-v2} --- the map requiring the highest level of coordination (8 agents must cooperate to achieve the goal) --- vanilla MAPPO fails to learn entirely, while PERLA MAPPO successfully learns and approaches the optimal solution. This demonstrates that PERLA enables MAPPO to scale to scenarios in which many agents must cooperate to succeed.}
        \label{fig:LBF_IPPO}
    \end{center}
\end{figure}


\clearpage

\section{Learning Curves on Starcraft Multi-agent Challenge}\label{section:smac_all_maps}
\begin{figure}[h!]
    \begin{center}
        \includegraphics[width=0.95\textwidth]{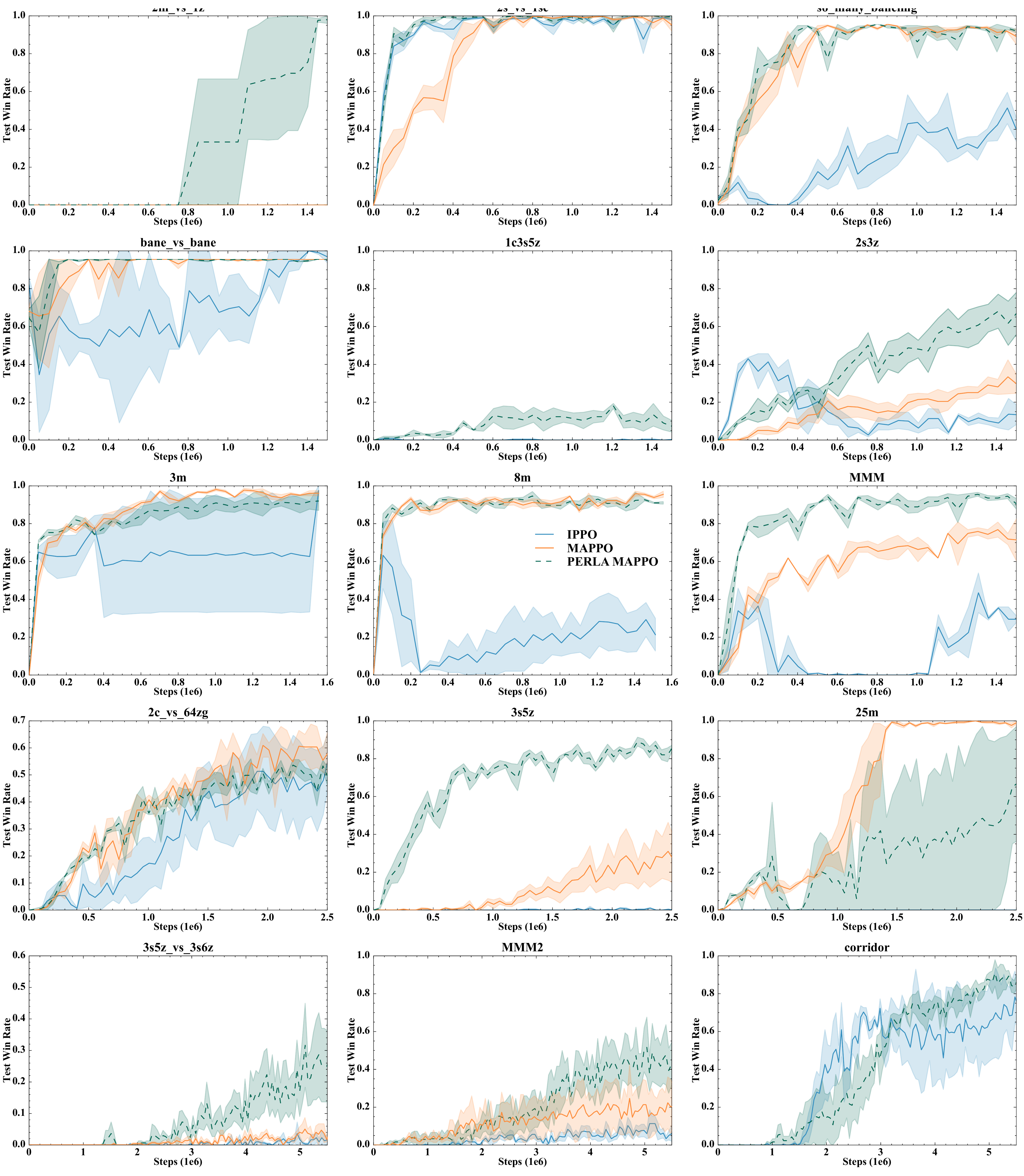}
        \caption{Learning curves of PERLA MAPPO and baselines on all tested \emph{StarCraft Multi-agent Challenge} maps. In all except one map  PERLA MAPPO outperforms or matches all baselines. Notably, PERLA MAPPO even out-performs IPPO on \emph{corridor}, a map where it is well known that MAPPO struggles and IPPO is often more effective. This implies that PERLA may even help algorithms over-come failure modes.}
        \label{figure:smac_all_maps_learning_curves}
    \end{center}
\end{figure}

\clearpage


\section{Computing Resources and Hyperparmeters}
\subsection{Computing Resources}
All experiments (including hyperparameter search) were run on machines equipped as desribed in Table \ref{tab:compute}. We used multiple such machines in parallel to complete all runs.

\begin{table}[h!]
    \centering
    \begin{tabular}{l|l}
        \textbf{Component} & \textbf{Description} \\
        \hline
        CPU & Intel Core i9-9900X CPU @ 3.50GHz \\
        GPU & Nvidia RTX 2080  \\
        Memory & 64 GB DDR4 \\
    \end{tabular}
    \caption{Specifications of machines used to run empirical component of this paper.}
    \label{tab:compute}
\end{table}

\subsection{Hyperparameters}
We used the implementation of \citep{papoudakis2021benchmarking} for the baseline algorithms as well as the foundation upon which we implemented the \emph{PERLAISED} version of IPPO and MAPPO. For the baseline algorithms, we re-used the optimal hyperparameters reported by \citep{papoudakis2021benchmarking}. To optimise PERLA\_IPPO and PERLA\_MAPPO, for each environment, we sampled a small number of tasks and used simple grid-search to optimise selected hyperparameters. The combination of hyperparameters that achieved maximum Return or Test Win-Rate were then used in the runs of all remaining tasks. Table \ref{tab:hyperparams} shows the hyperparameters optimised and the range of values evaluated.

\begin{table}[h]
    \centering
    \begin{tabular}{l|l}
        \textbf{Hyperparameter} & \textbf{Range} \\
        \hline
        Hidden Dimension & [64, 128, 256] \\
        Learning Rate & [3e-4, 5e-4] \\
        Network Type & [Fully-connected, Gated Recurrent Network] \\
        Reward Standardisation & [False, True] \\ 
        Entropy Coefficient & [1e-3 1e-2 1e-1] \\
        Target network update & [Hard update every 200 steps, Soft update of 0.01 per step] \\
        Number of steps for calculating Return & [5, 10] \\
        Number of PERLA samples (K) & [10, 100] \\
    \end{tabular}
    \caption{Hyperparameters and range of values evaluated to optimise \emph{PERLAISED} variants of IPPO and MAPPO.}
    \label{tab:hyperparams}
\end{table}

\section{LBF experiments} \label{sec:lbf_experiments_appendix}
\subsection{Maps}
We used the following maps in Level-based Foraging: \emph{Foraging-5x5-2p-1f-v2,  Foraging-5x5-2p-1f-coop-v2, Foraging-5x5-2p-2f-v2, Foraging-8x8-2p-2f-coop-v2, Foraging-8x8-3p-3f-v2, Foraging-10x10-5p-1f-coop-v2, Foraging-10x10-3p-5f-v2, Foraging-10x10-5p-3f-v2, Foraging-10x10-8p-1f-coop-v2, Foraging-15x15-8p-1f-coop-v2}

\subsection{Selected Hyperparameters}
We used the hyperparameters given in Table \ref{tab:hyperparams-lbf} for all runs of LBF.
\begin{table}[h]
    \centering
    \begin{tabular}{l|l}
        \textbf{Hyperparameter} & \textbf{Value} \\
        \hline
        Hidden Dimension & [256] \\
        Learning Rate & [5e-4] \\
        Network Type & [Fully-connected] \\
        Reward Standardisation & [False] \\ 
        Entropy Coefficient & [1e-3] \\
        Target network update & [Soft update of 0.01 per step] \\
        Number of steps for calculating Return & [10] \\
        Number of PERLA samples (K) & [100] \\
    \end{tabular}
    \caption{Optimised hyperparameters of PERLA MAPPO on LBF.}
    \label{tab:hyperparams-lbf}
\end{table}

\clearpage

\section{SMAC Experiments}\label{section:smac_details_appendix}
\subsection{Maps}
We used the following maps in StarCraft Multiagent Challenge.\\
\textbf{Easy} - \emph{2m\_vs\_1z, 2s\_vs\_1sc, so\_many\_baneling, bane\_vs\_bane, 1c3s5z, 2s3z, 3m, 8m, MMM}\\
\textbf{Hard} - \emph{2c\_vs\_64zg, 3s5z, 25m}\\
\textbf{Very Hard} - \emph{3s5z\_vs\_3s6z, MMM2, corridor}

\subsection{Selected Hyperparameters}
Hyperparameters used for Multi-agent PPO in the SMAC domain.
\begin{table}[h!]
    \begin{adjustbox}{width=0.4\columnwidth, center}
        \begin{tabular}{c | c } 
            \toprule
            Hyperparameters & value\\
            \midrule
            actor lr & 1e-3\\ 
            critic lr & 5e-4\\
            gamma & 0.99\\
            batch size & 3200\\
            num mini batch & 1\\
            PPO epoch & 10\\
            PPO clip param & 0.2\\
            entropy coef & 0.01\\
            optimiser & ADAM\\
            opti eps & 1e-5\\
            max grad norm & 10\\
            actor network & mlp\\
            hidden layper & 1\\
            hidden layer dim & 64\\
            activation & ReLU\\
            gain & 0.01\\
            eval episodes & 32\\
            use huber loss & True\\
            rollout threads & 32\\
            episode length & 100\\
            Number of PERLA samples (K) & 100 \\

            \bottomrule
        \end{tabular}
    \end{adjustbox}
\end{table}

\clearpage

\section{Matrix Game Experiments}
\subsection{Selected Hyperparameters}
Hyperparameters used for Multi-agent PPO in the matrix game domain.

\begin{table}[h!]
    \begin{adjustbox}{width=0.3\columnwidth, center}
        \begin{tabular}{c | c } 
            \toprule
            Hyperparameters & value\\
            \midrule
            actor lr & 1e-4\\ 
            gamma & 0.99\\
            batch size & 64\\
            num mini batch & 1\\
            PPO epoch & 5\\
            PPO clip param & 0.2\\
            entropy coef & 0.01\\
            optimiser & ADAM\\
            opti eps & 1e-5\\
            max grad norm & 10\\
            actor network & mlp\\
            hidden layper & 1\\
            hidden layer dim & 64\\
            activation & ReLU\\
            gain & 0.01\\
            eval episodes & 1\\
            use huber loss & False\\
            rollout threads & 4\\
            episode length & 1\\
            \bottomrule
        \end{tabular}
    \end{adjustbox}
\end{table}

\clearpage

\section{Multi-agent Mujoco Experiments}
Hyperparameters used for Multi-agent PPO in the Multi-Agent MuJoCo domain.
\begin{table}[h!]
    \begin{adjustbox}{width=0.6\columnwidth, center}
        \begin{tabular}{c | c c c } 
            \toprule
            Hyperparameters & Hopper(3x1) & Swimmer(2x1) & Walker(2x3)\\ 
            \midrule
            actor lr & 5e-4 & 5e-4 & 5e-4\\ 
            critic lr & 5e-3 & 5e-3 &  5e-3\\
            lr decay & 1 & 1 & 1\\
            $\gamma$ & 0.99 & 0.99 & 0.99\\
            batch size & 4000 & 4000  & 4000\\
            num mini batch & 40 & 40 & 40\\
            PPO epoch & 5 & 5 & 5\\
            PPO clip param & 0.2 & 0.2  & 0.2\\
            entropy coef & 0.001 & 0.001  & 0.001\\
            optimiser & RMSProp & RMSProp  & RMSProp\\
            momentum & 0.9 & 0.9  & 0.9\\
            optim eps & 1e-5 & 1e-5 & 1e-5\\
            max grad norm & 0.5 & 0.5 & 0.5\\
            actor network & mlp & mlp  & mlp\\
            hidden layer & 2 & 2 & 2\\
            hidden layer dim & 128 & 128 & 128\\
            activation & ReLU & ReLU & ReLU\\
            eval episodes & 10 & 10  & 10\\
            rollout threads & 4 & 4 & 4\\
            episode length & 1000 & 1000 & 1000\\
            \bottomrule
        \end{tabular}
    \end{adjustbox}
\end{table}

\clearpage
\section{Theoretical Discussion on Value-based critic} \label{ap:vcritic}
In practical implementations of Actor-Critic algorithms, it is popular to train a value-based critic rather than a state-action critic, as often the underlying network tends to be easier to train. In this section we show the consistency of such approach with our framework and theoretical analysis. First we note that by definition of state-action value function we have:
\begin{equation*}
 \mathbb{E}[\sum_{l=t}^{\infty} \gamma^{l-t} r^l |s^t=s,a_i^t = a_i,\va_{-i}^t = a_{-i} ] = 
\end{equation*}
\begin{equation*}
     \mathbb{E}[\sum_{l=t}^{t+m} \gamma^{l-t} r^l |s^t=s,a_i^t = a_i,\va_{-i}^t = a_{-i} ] + \mathbb{E}[\mathbb{E}[\sum_{l=t+m}^{\infty} \gamma^{l-t} r_l|s^l, \va_{-i}^l] |s^t=s,a_i^t = a_i,\va_{-i}^t = a_{-i} ] =
\end{equation*}
\begin{equation*}
    = \mathbb{E}[\sum_{l=t}^{t+m} \gamma^{l-t} r^l |s^t=s,a_i^t = a_i,\va_{-i}^t = \va_{-i} ] + \mathbb{E}[V(s^l, \va_{-i}^l) |s^t=s,a_i^t = a_i,\va_{-i}^t = a_{-i}] ,
\end{equation*}
where we introduced the value critic $V(s^l, \va_{-i}^l) = \mathbb{E}[\sum_{l=t+m}^{\infty} \gamma^{l-t} r_l|s^l, \va_{-i}^l]$.
We can thus approximate the expectations by one-off trajectory sample from the environment giving raise to the $m$-step value function critic:
\begin{equation*}
     Q(s^t,a_i^t,\va_{-i}^t) \approx \sum_{l=t}^{t+m} \gamma^{l-t} r^l + \gamma^{m} V(s^m, \va_{-i}^m) ,
\end{equation*}
where the trajectory is sampled following current agents' policies, after having selected joint action $(a_i^t,\va_{-i}^t)$ at state $s^t$. Observe that while the rewards $r^l$ and state after $m$ steps $s^m$ are sampled from the environment, the joint action of other agents after $m$ steps $\va_{-i}^m$ is sampled from their joint policy. Hence, we can obtain more samples  of $\va_{-i}^m$, without performing additional steps in the environment. Let $\va_{-i}^{m(j)}$ be the $j$th sample from $\vpi_{-i}(s^m)$, then we can simply marginalise the value-function critic as follows:
\begin{equation*}
    \tilde{V}(s^t) = \frac{1}{k} \sum_{j=1}^k V(s^t, \va_{-i}^{l(j)}).
\end{equation*}
By tower property of expectations and the properties of Monte-Carlo estimators we have that:
\begin{equation*}
   \mathbb{E}[\hat{V}(s^m) |s^t=s,a_i^t = a_i,\va_{-i}^t = a_{-i}] =  \mathbb{E}[\mathbb{E}[\hat{V}(s^m)|s^m]|s^t=s,a_i^t = a_i,\va_{-i}^t = a_{-i}] = 
\end{equation*}
\begin{equation*}
    = \mathbb{E}[\mathbb{E}[V(s^m)|s^m]|s^t=s,a_i^t = a_i,\va_{-i}^t = a_{-i}] = \mathbb{E}[V(s^m)|s^t=s,a_i^t = a_i,\va_{-i}^t = a_{-i}]
\end{equation*}

Hence both the estimators $ Q(s^t,a_i^t,\va_{-i}^t) = \sum_{l=t}^{t+m} \gamma^{l-t} r^l + V(s^m, \va_{-i}^m)$ and $ \tilde{Q}(s^t,a_i^t,\va_{-i}^t) = \sum_{l=t}^{t+m} \gamma^{l-t} r^l + \tilde{V}(s^m, \va_{-i}^m)$ are equal in expectation, and since $Q(s^t,a_i^t,\va_{-i}^t)$ is unbiased, so is $\tilde{Q}(s^t,a_i^t,\va_{-i}^t)$. Observe that for $C = A + B$, we have:
\begin{equation*}
    \textrm{Var}(C) =  \textrm{Var}(A) + \textrm{Var}(B) + 2\textrm{Cov}(A, B) \le \textrm{Var}(A) + \textrm{Var}(B) + 2\sqrt{\textrm{Var}(A)\textrm{Var}(B)}
\end{equation*}
We also note that $\textrm{Var}(\tilde{V}(s, \va_{-i})) = \frac{\textrm{Var}(V(s, \va_{-i}))}{k}$. Thus we obtain the following upper bounds on the variance of $Q$ and $\tilde{Q}$:
\begin{equation*}
    \textrm{Var}(Q(s^t,a_i^t,\va_{-i}^t)) \le \textrm{Var} \left (\sum_{l=t}^{t+m} \gamma^{l-t} r^l \right ) + \textrm{Var}(V(s^l, \va_{-i}^l)) + 2\sqrt{\textrm{Var}(V(s^l, \va_{-i}^l)) \textrm{Var} \left (\sum_{l=t}^{t+m} \gamma^{l-t} r^l \right)}
\end{equation*}
\begin{equation*}
    \textrm{Var}(\tilde{Q}(s^t,a_i^t,\va_{-i}^t)) \le \textrm{Var} \left (\sum_{l=t}^{t+m} \gamma^{l-t} r^l \right ) + \frac{\textrm{Var}(V(s^l, \va_{-i}^l))}{k} + 2\sqrt{\frac{\textrm{Var}(V(s^l, \va_{-i})) \textrm{Var} \left (\sum_{l=t}^{t+m} \gamma^{l-t} r^l \right)}{k}}
\end{equation*}
Hence we conclude that even in the case of a $m$-step critic, the marginalisation property still produces a smaller bound on the variance.
\clearpage
\section{Proofs of Theoretical Results}

\subsection*{Proof of Theorem \ref{th:mcfvar}} \label{ap:mcfvar}
Let us first present a Lemma, which we use to prove the Theorem \ref{th:mcfvar}.

\begin{lemma} \label{lemma:mcfvar}
Given $N$ random variables $( x_i)_{i=1}^N$, where $x_i: \Omega \to \mathcal{X}_i$ and a measurable function $f: \times_{i=1}^N \mathcal{X}_i \to \mathbb{R}$, if we define $\tilde{f}(x_1,\dots,x_M) := \mathbb{E}[f(x_1,\dots,x_N)|x_1, \dots, x_M]$ for some $M \le N$, we have that:
\begin{equation} \label{eq:variance_improvement}
     \textrm{Var}(f(x_1,\dots,x_N)) \ge \textrm{Var}(\tilde{f}(x_1, \dots, x_M)).
\end{equation}
Moreover, for a $k$-sample Monte-Carlo estimator of $\tilde{f}$ given by: 
$$\hat{f}(x_1,\dots,x_M) = \frac{1}{k} \sum_{i=1}^k f(x_1, \dots, x_M , x_{M+1}^{(i)},\dots,x_N^{(i)}),$$
where $x_j^{(i)}$ is the $i^{th}$ sample of $x_j$ we have:
\begin{equation} \label{eq:mc_variance}
    \textrm{Var}(\hat{f}(x_1,\dots,x_M)) = \frac{1}{k} \textrm{Var}(f(x_1, \dots, x_N)) + \frac{k-1}{k} \textrm{Var}(\tilde{f}(x_1,\dots,x_M)).
\end{equation}
\end{lemma}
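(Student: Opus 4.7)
The plan is to derive both claims from the law of total variance applied with the sub-$\sigma$-algebra $\mathcal{F}_M := \sigma(x_1,\dots,x_M)$. This is the natural choice because $\tilde f$ is by construction $\mathcal{F}_M$-measurable, and the fresh samples drawn in the Monte-Carlo estimator are exactly the coordinates that $\mathcal{F}_M$ does not control. Both parts then follow from conditioning arguments of essentially Rao--Blackwell flavour.

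For \eqref{eq:variance_improvement}, I would apply the decomposition
\[
\textrm{Var}(f(x_1,\dots,x_N)) = \mathbb{E}\bigl[\textrm{Var}(f(x_1,\dots,x_N)\mid \mathcal{F}_M)\bigr] + \textrm{Var}\bigl(\mathbb{E}[f(x_1,\dots,x_N)\mid \mathcal{F}_M]\bigr),
\]
recognise the second summand as exactly $\textrm{Var}(\tilde f(x_1,\dots,x_M))$ by the very definition of $\tilde f$, and discard the first summand because it is the expectation of a non-negative random variable. That immediately delivers the inequality.

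For the identity \eqref{eq:mc_variance}, set $Y_i := f(x_1,\dots,x_M,x_{M+1}^{(i)},\dots,x_N^{(i)})$ so that $\hat f = \tfrac{1}{k}\sum_{i=1}^k Y_i$, and expand
\[
\textrm{Var}(\hat f) = \frac{1}{k^2}\sum_{i=1}^k \textrm{Var}(Y_i) + \frac{1}{k^2}\sum_{i\ne j}\textrm{Cov}(Y_i,Y_j).
\]
Each $Y_i$ has the same joint law as $f(x_1,\dots,x_N)$, contributing $\textrm{Var}(f)$ per diagonal term. For the off-diagonals, the key observation is that \emph{conditional on} $\mathcal{F}_M$ the fresh draws across different indices are independent, which forces $\mathbb{E}[Y_iY_j\mid \mathcal{F}_M] = \tilde f(x_1,\dots,x_M)^2$; applying the tower property then yields $\textrm{Cov}(Y_i,Y_j) = \mathbb{E}[\tilde f^2] - (\mathbb{E}[\tilde f])^2 = \textrm{Var}(\tilde f)$. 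Counting $k$ diagonal and $k(k-1)$ off-diagonal entries produces the stated convex combination.

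The only subtle point is this covariance step: the $Y_i$'s are \emph{not} unconditionally independent, since they all share the common prefix $(x_1,\dots,x_M)$, so one must condition on $\mathcal{F}_M$ before invoking independence and only afterwards take the outer expectation. Once this is stated cleanly the algebra is routine. As a sanity check, \eqref{eq:mc_variance} gives $\textrm{Var}(f)$ at $k=1$ and $\textrm{Var}(\tilde f)$ in the limit $k\to\infty$, so the estimator interpolates monotonically between the two extremes and is fully consistent with \eqref{eq:variance_improvement}; specialising $f\leftarrow Q_i$ and $(x_1,\dots,x_M)\leftarrow(s,a_i)$ then recovers Theorem~\ref{th:mcfvar} as a direct corollary.
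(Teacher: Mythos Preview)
Your proof of \eqref{eq:variance_improvement} is identical to the paper's: both apply the law of total variance with respect to $\sigma(x_1,\dots,x_M)$, identify the outer-variance term as $\textrm{Var}(\tilde f)$, and drop the non-negative inner-variance term.

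For \eqref{eq:mc_variance} you take a genuinely different route. The paper applies the law of total variance a second time, now to $\hat f$: conditioning on $(x_1,\dots,x_M)$ gives $\textrm{Var}(\hat f\mid\mathcal{F}_M)=\tfrac{1}{k}\textrm{Var}(f\mid\mathcal{F}_M)$ and $\mathbb{E}[\hat f\mid\mathcal{F}_M]=\tilde f$, and then the expression $\mathbb{E}[\textrm{Var}(f\mid\mathcal{F}_M)]=\textrm{Var}(f)-\textrm{Var}(\tilde f)$ from the first decomposition is substituted back in. You instead expand $\textrm{Var}(\hat f)$ as a bilinear form in the $Y_i$'s and compute the off-diagonal covariances directly via conditional independence. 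Both arguments are correct and of comparable length. The paper's version is slightly more unified in that it reuses the same tool twice; your version has the advantage of making explicit the structural fact $\textrm{Cov}(Y_i,Y_j)=\textrm{Var}(\tilde f)$, which clarifies exactly where the shared prefix $(x_1,\dots,x_M)$ enters and why the $Y_i$'s fail to be independent.
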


\begin{proof}
Using the law of total variance we can decompose the variance of $\hat{f}$ as follows:
\begin{equation*}
    \textrm{Var}(f(x_1,\dots,x_N)) = \mathbb{E}[\textrm{Var}(f(x_1,\dots,x_N)| x_1 ,\dots, x_M)] + \textrm{Var}(\mathbb{E}[f(x_1,\dots,x_N)| x_1 ,\dots, x_M]) 
\end{equation*}
We can now observe that the conditional expectation inside the variance in second term is just $\tilde{f}$. We therefore get:
\begin{equation} \label{eq:var_decomp_f}
    \textrm{Var}(f(x_1,\dots,x_N)) = \mathbb{E}[\textrm{Var}(f(x_1,\dots,x_N)| x_1 ,\dots, x_M)] + \textrm{Var}(\tilde{f}(x_1,\dots,x_M))
\end{equation}
The first term is an expectation of variance and therefore cannot be negative, this proves the statement in Equation \ref{eq:variance_improvement}.  Applying the law of total variance, again, this time for $\hat{f}(x_1,\dots,x_M)$ we get:
\begin{equation*}
    \textrm{Var}(\hat{f}(x_1,\dots,x_M)) = \mathbb{E}[\textrm{Var}(\hat{f}(x_1,\dots,x_M)|x_1,\dots,x_M)] + \textrm{Var}(\mathbb{E}[\hat{f}(x_1,\dots,x_M)|x_1,\dots,x_M])
\end{equation*}
Note that $\hat{f}(x_1,\dots,x_M)$ is a random variable dependent on $(x_{M+1},\dots,x_N)$. Since we estimate $\hat{f}$ by Monte-Carlo method with $k$ samples, we get that the expectation must be equal to the true value $\tilde{f}$ and variance is equal to the variance of samples divided by $k$. We therefore get:
\begin{equation*} 
    \textrm{Var}(\hat{f}(x_1,\dots,x_M)) = \frac{1}{k}\mathbb{E}[\textrm{Var}(f(x_1,\dots,x_N)|x_1,\dots,x_M)] + \textrm{Var}(\tilde{f}(x_1,\dots,x_M))
\end{equation*}
Substituting an expression for $\mathbb{E}[\textrm{Var}(f(x_1,\dots,x_N)|x_1,\dots,x_M)]$ from Equation \ref{eq:var_decomp_f}, we get:
\begin{equation*}
    \textrm{Var}(\hat{f}(x_1,\dots,x_M)) = \frac{1}{k}\textrm{Var}(f(x_1,\dots,x_N)) + \frac{k-1}{k}\textrm{Var}(\tilde{f}(x_1,\dots,x_N))
\end{equation*}
which proves the statement in Equation \ref{eq:mc_variance}.
\end{proof}

\mcfvar*
\begin{proof}

Taking the Q-function as $f$, the action $a_i$ of the $i$th  agent and the state $s$ as random variables $x_1,\dots,x_M$, and actions of other agents $\va_{-i}$ as remaining variables $x_{M+1}, \dots, x_N$ we immediately obtain the statement of the Theorem from Lemma \ref{lemma:mcfvar}.
\end{proof}
\section*{Proof of Theorem \ref{th:equal_exp}} \label{ap:equal_exp}

\equalexp*
\begin{proof}

Using the tower property of expectation we have:
\begin{align*}
    \mathbb{E}\left[\vg_{i}^{\textrm{C}}\right] &= \mathbb{E}\left[\mathbb{E}\left[\sum_{t=0}^{\infty}\gamma^{t}  Q_i\left(s^{t}, a_{i}^{t} \right)\nabla_{\vtheta_{i}}\log \pi_{i}\left(a_{i}^{t}\big|s^{t}\right)|a_i^t,s^t\right]\right]
    \\&= \mathbb{E}\left[ \sum_{t=0}^{\infty}\gamma^{t}\mathbb{E}[ Q_i\left(s^{t}, a_{i}^{t} \right)|a_i^t, s^t]\nabla_{\vtheta_{i}}\log \pi_{i}\left(a_{i}^{t}\big|s^{t}\right)\right] 
    \\&=\mathbb{E}\left[ \sum_{t=0}^{\infty}\gamma^{t} \mathbb{E}[\hat{Q}_i\left(s^{t}, a_{i}^{t} \right)|a_i^t,s^t]\nabla_{\vtheta_{i}}\log \pi_{i}\left(a_{i}^{t}\big|s^{t}\right)\right] 
    \\&
    = \mathbb{E}\left[ \sum_{t=0}^{\infty}\gamma^{t} \hat{Q}_i\left(s^{t}, a_{i}^{t} \right)\nabla_{\vtheta_{i}}\log \pi_{i}\left(a_{i}^{t}\big|s^{t}\right)\right] = \mathbb{E}\left[\vg_{i}^{\textrm{P}}\right],
\end{align*}
where the third equality is due to Monte-Carlo estimates being unbiased.
\end{proof}

\section*{Proof of Theorem \ref{th:sharingtdbound}} \label{ap:sharingdtbound}
\sharingdtbound*
\begin{proof}
Using the law of total variance on the variance of $j$-th component of PERLA estimator we get:
\begin{equation*}
    \textrm{Var}(g^S_{i,j}) = \mathbb{E}[\textrm{Var}(g^S_{i,j}|a_i^t,s^t)] + \textrm{Var}(\mathbb{E}[g^S_{i,j}|a_i^t,s^t])  
\end{equation*}
Analogously for the $j$-th component of the decentralised estimator we get:
\begin{equation*}
    \textrm{Var}(g^D_{i,j}) = \mathbb{E}[\textrm{Var}(g^D_{i,j}|a_i^t,s^t)] + \textrm{Var}(\mathbb{E}[g^D_{i,j}|a_i^t,s^t])  
\end{equation*}
Using the fact that Monte-Carlo estimates are unbiased and that critics are perfect, we get:
\begin{equation*}
   \textrm{Var}(\mathbb{E}[g^S_{i,j}|a_i^t,s^t])   =  \textrm{Var}\left(\sum_{t=0}^{\infty}\gamma^{t}\nabla_{\theta_{i,j}}\log \pi_{i}\left(a^{i}_{t}\big|s_{t}\right)\mathbb{E}[\hat{Q}\left(s_{t}, a_{i}^{t}\right)|a_i^t,s^t]\right) = \textrm{Var}(\mathbb{E}[g^D_{i,j}|a_i^t,s^t])  
\end{equation*}
We therefore obtain:
\begin{align*}
    &\textrm{Var}(g^S_{i,j}) - \textrm{Var}(g^D_{i,j})  \\&=
    \mathbb{E}[\textrm{Var}(g^S_{i,j}|a_i^t,s^t)] - \mathbb{E}\left[\textrm{Var}(g^D_{i,j}|a_i^t,s^t)\right] 
    \\&=\mathbb{E}\left[  \sum_{t=0}^{\infty}\gamma^{2t}(\nabla_{\theta_{i,j}}\log \pi_{i}\left(a^{i}_{t}\big|s^{t}\right))^2(\textrm{Var}(\hat{Q}\left(s^{t}, a_{i}^{t} \right) |a_i^t,s^t)-\textrm{Var}(\tilde{Q}\left(s^{t}, a_{i}^{t} \right) |a_i^t,s^t))\right]. 
\end{align*}
Since $\tilde{Q}_i(s^t,a_i^t)$ is a deterministic function given $a_i^t$ and $s_t$, the expression above simplifies to:
\begin{equation*}
   \textrm{Var}(g^S_{i,j}) - \textrm{Var}(g^D_{i,j}) = \mathbb{E}\left[  \sum_{t=0}^{\infty}\gamma^{2t}(\nabla_{\theta_{i,j}}\log \pi_{i}\left(a^{i}_{t}\big|s_{t}\right))^2 \textrm{Var}(\hat{Q}\left(s^{t}, a^{t}_i \right)|a_i^t,s^t)\right] 
\end{equation*}
Since $\hat{Q}_i$ is a Monte-Carlo approximation of $\tilde{Q}_i$ with $k$ samples of $Q_i$ we get:
\begin{equation*}
   \textrm{Var}(g^S_{i,j}) - \textrm{Var}(g^D_{i,j})  = \mathbb{E}\left[  \sum_{t=0}^{\infty}\gamma^{2t}(\nabla_{\theta_{i,j}}\log \pi_{i}\left(a^{i}_{t}\big|s_{t}\right))^2 \frac{1}{k} \textrm{Var}(Q\left(s^{t}, a^{t} \right)|a_i^t,s^t)\right] 
\end{equation*}
We can now sum over all components of the gradient vector to obtain:
\begin{align}
    &\textrm{Var}(\vg_{i}^{\textrm{C}}) - \textrm{Var}(\vg_{i}^{\textrm{D}}) = \sum_{j=1}^d \mathbb{E}\left[  \sum_{t=0}^{\infty}\gamma^{2t}(\nabla_{\theta_{i,j}}\log \pi_{i}\left(a^{i}_{t}\big|s_{t}\right))^2 \frac{1}{k} \textrm{Var}(Q\left(s^{t}, a^{t} \right)|a_i^t,s^t)\right]
    \\&= \mathbb{E}\left[  \sum_{t=0}^{\infty}\gamma^{2t}\lVert \nabla_{\theta_{i,j}}\log \pi_{i}\left(a^{i}_{t}\big|s_{t}\right)\rVert^2 \frac{1}{k} \textrm{Var}(Q\left(s^{t}, a^{t} \right)|a_i^t,s^t)\right] \label{eq:uncomplete_bound_var}
   \\&  \le \frac{B_i^2}{k} \frac{1}{1-\gamma^2} \mathbb{E}\left[\textrm{Var}(Q\left(s^{t}, a^{t} \right)|a_i^t,s^t)\right].
\end{align}
We can now upper-bound the variance of the Q-function as follows:
\begin{equation*}
    \textrm{Var}(Q\left(s^{t}, a^{t} \right)|a_i^t,s^t) = \mathbb{E}\left[Q\left(s^{t}, a^{t} \right)^2|a_i^t,s^t] - \mathbb{E}[Q\left(s^{t}, a^{t} \right)|a_i^t,s^t\right]^2 
\end{equation*}
\begin{equation} \label{eq:q_func_var_ub}
     \le \mathbb{E}\left[Q\left(s^{t}, a^{t} \right)^2|a_i^t,s^t\right] \le C_i^2
\end{equation}
Combining Equations \ref{eq:uncomplete_bound_var} and \ref{eq:q_func_var_ub} completes the proof.

\end{proof}

\section*{Proof of Theorem \ref{th:ctdedtbound}}
\ctdedtbound*
\begin{proof}
Similar to the proof of Theorem \ref{th:sharingtdbound}, because we assume perfect critics, we have:
\begin{equation*}
    \textrm{Var}(\mathbb{E}[g^C_{i,j}|a_i^t,s^t])   =  \textrm{Var}\left(\sum_{t=0}^{\infty}\gamma^{t}\nabla_{\theta_{i,j}}\log \pi_{i}\left(a^{i}_{t}\big|s_{t}\right)\mathbb{E}[Q\left(s_{t}, a_{i}^{t}\right)|a_i^t,s^t]\right) = \textrm{Var}(\mathbb{E}[g^D_{i,j}|a_i^t,s^t])  
\end{equation*}
We can therefore follow the proof of Theorem \ref{th:sharingtdbound}, using the law of total variance to obtain that:
\begin{equation*}
    \textrm{Var}(\vg_{i}^{\textrm{C}}) - \textrm{Var}(\vg_{i}^{\textrm{D}}) \le \mathbb{E}[  \sum_{t=0}^{\infty}\gamma^{2t}\lVert \nabla_{\theta_{i,j}}\log \pi_{i}\left(a^{i}_{t}\big|s_{t}\right)\rVert^2 \textrm{Var}(Q\left(s^{t}, a^{t} \right)|a_i^t,s^t)]
\end{equation*}
\begin{equation*}
     \le B_i^2 \frac{1}{1-\gamma^2} \mathbb{E}[\textrm{Var}(Q\left(s^{t}, a^{t} \right)|a_i^t,s^t)]
\end{equation*}
We can now bound the variance of Q-function using the inequality in Equation \ref{eq:q_func_var_ub}, which completes the proof.
\end{proof}

\section*{Proof of Theorem \ref{th:convergence}} \label{ap:convergence}

\convergence*

\begin{proof}
We prove the convergence by showing that a multi-agent Actor-Critic algorithm based on $\vg_i^P$ or $\vg_i^C$ can be expressed as a special case of a single agent Actor-Critic algorithm. The convergence of single agent Actor-Critic algorithm has been established by \citep{konda1999actor}. 
First, let us define a joint gradient as a vector consisting of concatenated gradient vectors for each agent. We call them $\vg^C = ((\vg^C_1)^T,\dots,(\vg^C_N)^T)^T$ and $\vg^S = ((\vg^S_1)^T,\dots,(\vg^S_N)^T)^T$ for the centralised and PERLA estimator respectively (we denote the number of agents by $N$). Let us also define the joint policy for all agents as $\vpi(\va|s) = \prod_{i=1}^N \pi_i(a_i|s)$. Since only the policy of the $i$th agent depends on set of parameters $\vtheta_i$, we have that:
\begin{equation*}
    \nabla_{\vtheta_i} \log \vpi(\va|s) = \nabla_{\vtheta_i} \log \prod_{i=1}^N \pi_i(a_i|s)  = \nabla_{\vtheta_i} \sum_{i=1}^N \log \pi_i(a_i|s) = \nabla_{\vtheta_i} \log \pi_i(a_i|s)
\end{equation*}
We can therefore write:
\begin{equation*}
     \nabla_{\vtheta} \log \vpi(\va|s) = ((\nabla_{\vtheta_1} \log \pi_1(a_1|s))^T,\dots, (\nabla_{\vtheta_N} \log \pi_N(a_N|s))^T)^T
\end{equation*}
This enables us to express the centralised estimator as:
\begin{equation*}
    \vg^C = \sum_{t=0}^{\infty}\gamma^{t}Q(s^{t},\va^t)\nabla_{\vtheta}\log \vpi(\va^{t}\big|s^{t})
\end{equation*}
Let us now define a single agent using policy $\vpi(\va^t|s^t)$ and taking a multidimensional action $\va^t$ at each step. In such a case, $\vg^C$ is an unbiased policy gradient estimate for that agent, which proves the convergence for an Actor-Critic algorithm using $\vg^C$. We have already established in Theorem \ref{th:equal_exp} that $\mathbb{E}[\vg^C_i] = \mathbb{E}[\vg^S_i]$, which means $\mathbb{E}[\vg^C] = \mathbb{E}[\vg^S]$. Therefore $\vg^S$ is also an unbiased estimate of this special case of single-agent Actor-Critic, which proves its convergence by the same argument.
\end{proof}

\clearpage

\section{Ablation on number of agents}
\begin{figure}[h]
    \centering
    \includegraphics{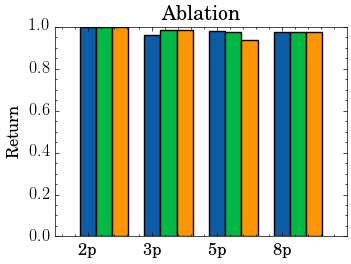}
    \caption{Final mean return against the number of players increases in LBF environment. Blue indicates 5 samples, green indicates 25 and orange corresponds to 125 samples.}
    \label{fig:ablation_agents}
\end{figure}

\end{document}